\renewcommand{\orcidID}[1]{\href{https://orcid.org/#1}{\includegraphics[scale=.03]{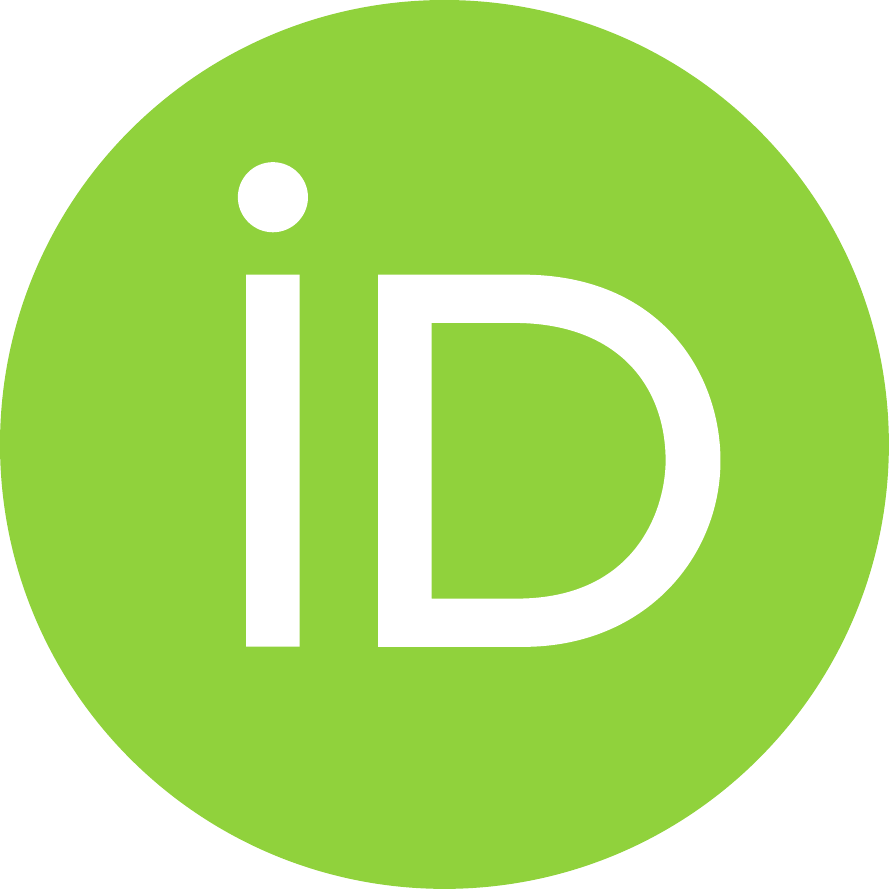}}}
\newcommand{\rique}{rique\xspace}
\newcommand{\Rique}{Rique\xspace}
\DeclareMathOperator{\sq}{riq}
\DeclareMathOperator{\deq}{deq}
\DeclareMathOperator{\D}{\texttt{D}}
\DeclareMathOperator{\St}{\texttt{S}}
\DeclareMathOperator{\Qu}{\texttt{Q}}
\DeclareMathOperator{\DEQ}{\texttt{DEQ}}
\DeclareMathOperator{\SQ}{\texttt{RIQ}}
\DeclareMathOperator{\pert}{pert}
\DeclareMathOperator{\skel}{skel}
\definecolor{defblue}{rgb}{0.121,0.47,0.705}
\definecolor{linkblue}{rgb}{0.098,0.098,0.4392}
\let\emph\relax
\DeclareTextFontCommand{\emph}{\color{defblue}\em}
\newtheorem{observation}[theorem]{Observation}
\Crefname{figure}{Fig.}{Figs.}
\title{The Rique-Number of Graphs\thanks{This work was initiated at the Bertinoro Workshop on Graph Drawing 2022.}}
\author{Michael~A.~Bekos\inst{1}\orcidID{0000-0002-3414-7444} \and Stefan~Felsner\inst{2}\orcidID{0000-0002-6150-1998} \and Philipp~Kindermann\inst{3}\orcidID{0000-0001-5764-7719} \and Stephen~Kobourov\inst{4}\orcidID{0000-0002-0477-2724} \and Jan Kratochv\'il\inst{5}\orcidID{0000-0002-2620-6133} \and Ignaz~Rutter\inst{6}\orcidID{0000-0002-3794-4406}}
\authorrunning{Bekos et al.}
\institute{%
Department of Mathematics, University of Ioannina, Ioannina, Greece\\ \email{bekos@uoi.gr}
\and
Institut f\"ur Mathematik, Technische Universit\"at Berlin, Berlin, Germany
\\\email{felsner@math.tu-berlin.de}
\and
Fachbereich IV - Informatikwissenschaften, Universit\"at Trier, Trier, Germany
\\\email{kindermann@uni-trier.de}
\and
Department of Computer Science, University of Arizona, Tucson, Arizona, USA\\
\email{kobourov@cs.arizona.edu}
\and
Department of Applied Mathematics, Charles University, Prague, Czech Republic\\
\email{honza@kam.mff.cuni.cz}
\and
Fakultät für Informatik und Mathematik, Universität Passau, Passau, Germany\\
\email{rutter@fim.uni-passau.de}
}
\begin{document}

\maketitle

\begin{abstract}
We continue the study of linear layouts of graphs in relation to known data structures. At a high level, given a data structure, the goal is to find a linear order of the vertices of the graph and a partition of its edges into pages, such that the edges in each page follow the restriction of the given data structure in the underlying order. In this regard, the most notable representatives are the stack and queue layouts, while there exists some work also for deques. 

In this paper, we study linear layouts of graphs that follow the restriction of a restricted-input queue (rique), in which insertions occur only at the head, and removals occur both at the head and the tail. We characterize the graphs admitting rique layouts with a single page and we use the characterization to derive a corresponding testing algorithm when the input graph is 
maximal planar. We finally give bounds on the number of needed pages (so-called rique-number) of complete~graphs.

\keywords{linear layout \and restricted-input queue \and rique-number}
\end{abstract}

\section{Introduction}

Linear graph layouts form an important methodological tool, since they provide a key-framework for defining different graph-parameters (including the well-known cutwidth~\cite{doi:10.1137/0125042}, bandwidth~\cite{DBLP:journals/jgt/ChinnCDG82}~and pathwidth~\cite{DBLP:journals/jct/RobertsonS83}). As a result, the corresponding literature is rather rich; see~\cite{DBLP:journals/eatcs/SernaT05}. Such layouts typically consist of an order of the vertices of a graph and an objective over its edges that one seeks to optimize. In the closely-related area of permutations and arrangements, back in~1973, Pratt~\cite{DBLP:conf/stoc/Pratt73} introduced and studied several variants of linear layouts that one can derive by leveraging different data structures to capture the order of the vertices (e.g., stacks, queues and deques). 

Formally, given $k$ data structures $\D_1,\ldots,\D_k$, a graph $G$ admits a $(\D_1,\ldots,\D_k)$-layout if there is a linear order $\prec$ of the vertices of $G$ and a partition of the~edges of $G$ into $k$ sets $E_1,\ldots,E_k$, called \emph{pages}, such that for each page $E_i$ in the partition, each edge $(u,v)$ of $E_i$ is processed by the data structure $\D_i$ by inserting $(u,v)$ to $\D_i$ at $u$ and removing it from $\D_i$ at $v$ if $u \prec v$ in the linear layout. If the sequence of insertions and removals is feasible, then $G$ is called a $(\D_1,\ldots,\D_k)$-graph. We denote the class of $(\D_1,\ldots,\D_k)$-graphs  by $\D_1+\ldots+\D_k$.
For a certain data structure $\D$, the $\D$-number of a graph $G$ is the smallest $k$ such that $G$ admits a $(\D_1,\ldots,\D_k)$-layout with $\D=\D_1=\ldots=\D_k$. This graph parameter has been the subject of intense research for certain data structures, as we discuss~below.

\begin{figure}[t!]
	\centering	
	\begin{subfigure}[b]{.32\textwidth}
		\centering
		\includegraphics[page=1]{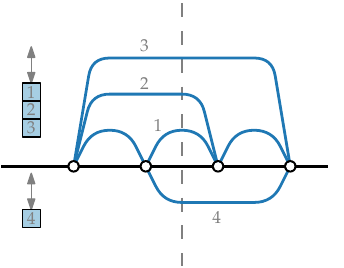}
		\caption{$(\St,\St)$-layout}
		\label{fig:stack}
	\end{subfigure}
	\hfil
	\begin{subfigure}[b]{.32\textwidth}
		\centering
		\includegraphics[page=2]{figures/layouts}
		\caption{$(\Qu,\Qu)$-layout}
		\label{fig:queue}
	\end{subfigure}
	\hfil
	\begin{subfigure}[b]{.32\textwidth}
		\centering
		\includegraphics[page=3]{figures/layouts}
		\caption{$\SQ$-layout}
		\label{fig:staq}
	\end{subfigure}
	\caption{
    Different linear layouts of the complete graph $K_4$.  The data structures are depicted in the states that corresponds to the dashed vertical line.}
	\label{fig:layouts}
\end{figure}

\begin{enumerate}
\item If $\D$ is a \emph{stack} (abbreviated by $\St$), then insertions and removals only occur at the head of $\D$; see \cref{fig:stack}. %
It is known that a non-planar graph may have linear stack-number, e.g., the stack-number of $K_n$ is $\lceil n/2 \rceil$~\cite{DBLP:journals/jct/BernhartK79}. A central result here is by Yannakakis, who back in 1986 showed that the stack-number of  planar graphs is at most $4$~\cite{DBLP:journals/jcss/Yannakakis89}, a bound which was only recently shown to be tight~\cite{DBLP:journals/jocg/KaufmannBKPRU20}. Certain subclasses of planar graphs, however, allow for stack-layouts with fewer than four stacks, e.g., see~\cite{DBLP:journals/algorithmica/BekosGR16,DBLP:journals/mp/CornuejolsNP83,DBLP:journals/dcg/FraysseixMP95,Ewald1973,DBLP:journals/dam/GuanY2019,DBLP:conf/focs/Heath84,DBLP:conf/esa/0001K19,DBLP:journals/appml/KainenO07,NC08,DBLP:conf/cocoon/RengarajanM95}.

\item If $\D$ is a \emph{queue} (abbreviated by $\Qu$), then insertions only occur at the head and removals only at the tail of $\D$; see \cref{fig:queue}. %
In this context, a breakthrough by Dujmovi\'c et al.~\cite{DBLP:journals/jacm/DujmovicJMMUW20} states that the queue-number of planar graphs is at most $49$, improving previous results~\cite{DBLP:journals/corr/BannisterDDEW18,DBLP:journals/siamcomp/BattistaFP13,DBLP:journals/jct/Dujmovic15,DBLP:journals/jgaa/DujmovicF18}. 
Even though this bound was recently improved to $42$~\cite{DBLP:conf/gd/BekosGR21}, the exact queue-number of planar graphs is not yet known; the current-best lower bound is $4$~\cite{DBLP:conf/gd/BekosGR21}. Again, 
several subclasses allow for layouts with significantly fewer than $42$ queues, e.g., see~\cite{DBLP:journals/algorithmica/AlamBGKP20,Ganley95,DBLP:journals/siamdm/HeathLR92,DBLP:conf/cocoon/RengarajanM95}.

\item If $\D$ is a \emph{double-ended queue} or \emph{deque} (abbreviated by $\DEQ$), then insertions and removals can occur both at the head and the tail of $\D$; we denote the deque-number of a graph $G$ by $\deq(G)$. This definition implies that $\St+\St \subseteq \DEQ \subseteq \St+\St+\Qu$.
A characterization by Auer et al.~\cite{DBLP:journals/jgaa/AuerBBBG18} (stating that a graph has deque-number 1 if and only if it is a spanning subgraph of a planar graph with a Hamiltonian path) implies that the first containment is strict, because a maximal planar graph with a Hamiltonian path but not a Hamiltonian cycle (e.g., the Goldner-Harary graph~\cite{GH75}) admits a $\DEQ$-layout, but not an $(\St,\St)$-layout. The second containment is also strict because $(\St,\St,\Qu)$-graphs can be non-planar (e.g., $K_6$~\cite{DBLP:journals/corr/abs-2107-04993}). Hence, $\St+\St \subsetneq \DEQ \subsetneq \St+\St+\Qu$ holds.
\end{enumerate}

\noindent\textbf{Our contribution.} In this work, we focus on the case where the data structure~$\D$ is a \emph{restricted-input queue} or \emph{\rique} (abbreviated by $\SQ$), in which insertions occur only at the head, and removals occur both at the head and the tail of $\D$; see~\cref{fig:staq}. We first characterize the graphs with $\rique$-number~$1$ as those admitting a planar embedding with a so-called \emph{strongly 1-sided subhamiltonian path}, that is, a Hamiltonian path $v_1,\ldots,v_n$ in some plane extension of the embedding such that each edge $(v_i,v_j)$ with $1 < i < j \le n$ leaves $v_i$ on the same side of the path; see~\cref{fig:1sided-hamiltonian}. This characterization allows us to derive an inclusion relationship similar to the one above for deques (namely, $\St,\Qu \subsetneq \SQ \subsetneq \St+\Qu$; see \cref{obs:inclusions}) and corresponding recognition algorithms for graphs with \rique-number~1 under some assumptions (\cref{thm:plane-staq-ham}). Then, we focus on bounds on~the \rique-number of a graph $G$, which we denote by $\sq(G)$. Our contribution is an edge-density bound for the graphs with \rique-number~$k$ (\cref{thm:density}), and a lower and an upper bound on the \rique-number of complete graphs (\cref{thm:staq-bounds}).

\begin{figure}[t]
    \centering
    \includegraphics[page=2]{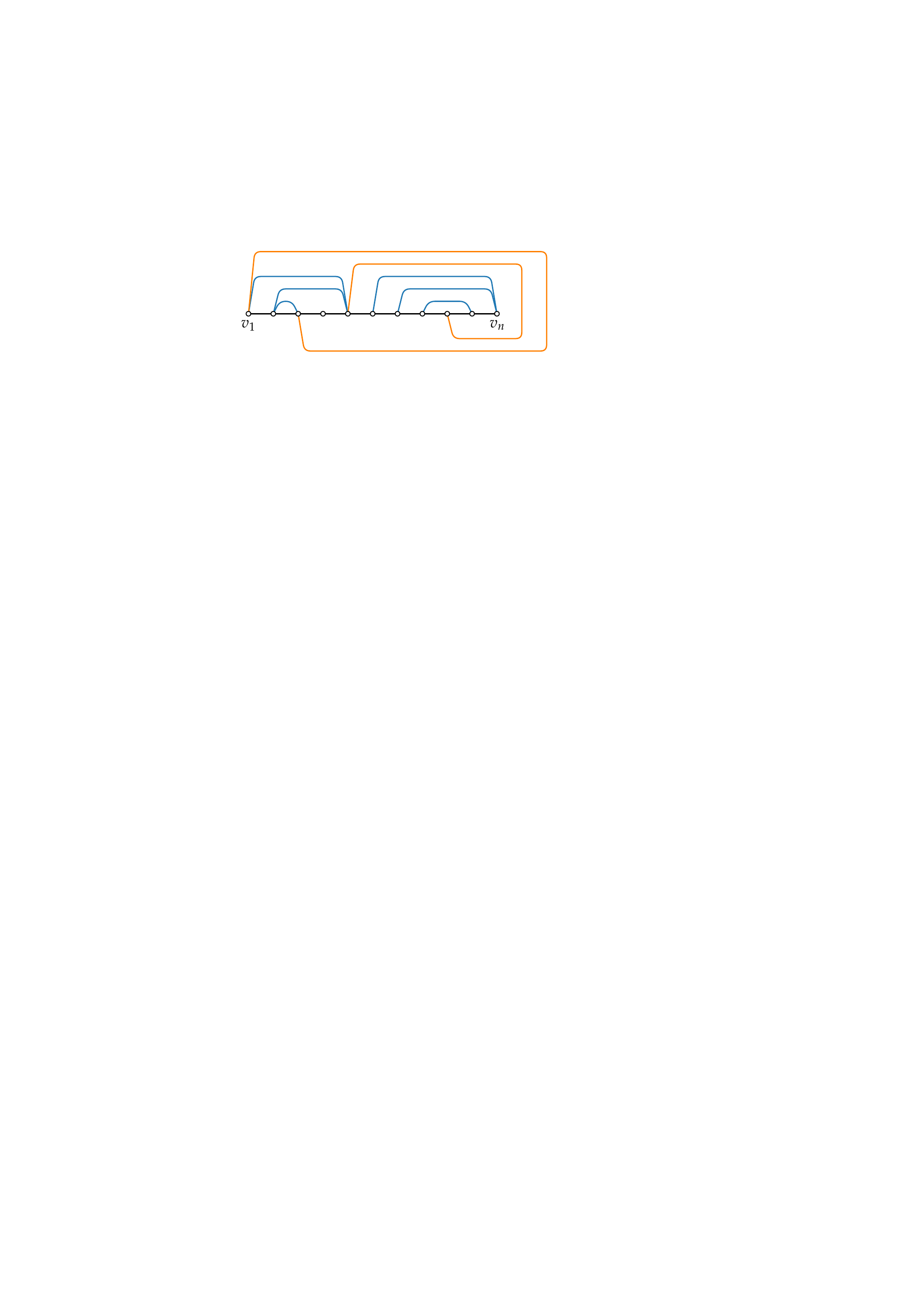}
    \caption{A strongly 1-sided Hamiltonian path and the state of the $\SQ$ that processes it right after processing the edges incident to~$v_i$.}
    \label{fig:1sided-hamiltonian}
\end{figure}

\section{Preliminaries}
\label{sec:preliminaries}

We start with definitions that are central in \cref{sec:staq-1}.
Given a \rique-layout, we call an edge $(u,v)$ a \emph{head-edge} (\emph{tail-edge}), if $(u,v)$ is removed at $v$ from the head (tail) of the $\SQ$. 
A \emph{strongly 1-sided Hamiltonian path} of a plane graph is a Hamiltonian path $v_1,\ldots,v_n$  such that each~edge $(v_i,v_j)$ with $1<i<j \le n$ leaves $v_i$ on the same side of the path, say w.l.o.g.\ the \emph{left} one, i.e., between $(v_{i-1},v_i)$ and $(v_i,v_{i+1})$ in clockwise order around $v_i$ (see \cref{fig:1sided-hamiltonian}).
A plane graph is \emph{strongly 1-sided Hamiltonian} if it contains a strongly 1-sided Hamiltonian path.
A planar graph is \emph{strongly 1-sided Hamiltonian} if it admits a planar embedding that contains a strongly 1-sided Hamiltonian path.
A planar (plane) graph $G$ is \emph{strongly 1-sided subhamiltonian} if there exists a planar (plane) supergraph $H$ of $G$ that is strongly 1-sided Hamiltonian.

Another key-tool that we leverage in \cref{sec:recognition} is the SPQR-tree. This data structure, introduced by Di Battista and Tamassia~\cite{dt-ipt-89,dt-ogasp-90}, compactly represents~all planar embeddings of a biconnected planar graph; see \cref{fig:spqr} for an example. It is unique and can be computed in linear time~\cite{gm-ltisp-00}. We assume familiarity with SPQR-trees; for a brief introduction refer to \cref{app:preliminaries}.

\begin{figure}[t]
    \centering
    \includegraphics{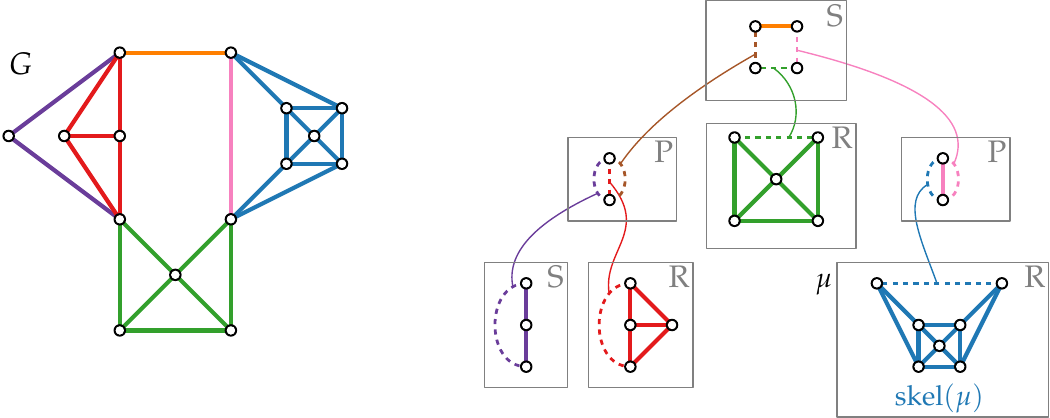}
    \caption{An SPQR-tree, omitting the Q-nodes.}
    \label{fig:spqr}
\end{figure}

\begin{toappendix}
\section{Omitted Material from \cref{sec:preliminaries}}
\label{app:preliminaries}
The SPQR-tree~$\mathcal T$ of a biconnected graph $G=(V,E)$ is a tree such that each node~$\mu$ of~$\mathcal T$ is associated with a multigraph~$\skel(\mu)$, called the \emph{skeleton} of~$\mu$, and whose edges can be~either \emph{real edges} or \emph{virtual edges}, and whose vertex sets are subsets of~$V$.  The leaves~of the SPQR-tree are Q-nodes, whose skeleton consists of two vertices that are~joined by one real edge and by one virtual edge that represents the rest of the graph.  The internal nodes of~$\mathcal T$ are either S-, P- or R-nodes, depending on their skeleton.  The skeleton of an S-node is a simple cycle of virtual edges, the skeleton of a P-node consists of two vertices that are joined by at least three parallel virtual edges, and the skeleton of an R-node is a triconnected graph consisting of virtual edges.  Each edge~$\mu\nu$ of~$\mathcal T$ is associated with two virtual edges, one in~$\skel(\mu)$ and one in~$\skel(\nu)$ that have the same endpoints in such a way that we obtain for each node~$\mu$ a bijection between the virtual edges of~$\skel(\mu)$ and the edges of~$\mathcal T$ incident to~$\mu$.  Further no two S-nodes and no two P-nodes are adjacent in~$\mathcal T$.

For an edge~$\mu\nu$ of~$\mathcal T$, the skeletons $\skel(\mu)$ and~$\skel(\nu)$ can be \emph{joined} by identifying~the endpoints of the virtual edges corresponding to the edge~$\mu\nu$ and then removing the virtual edge that corresponds to~$\mu\nu$.  The key property of the SPQR-tree of $G$ is that by joining along all tree edges, we obtain the graph~$G$.  If each skeleton is further equipped with a planar embedding, then joining them yields a planar embedding of~$G$, and in fact all planar embeddings of~$G$ can be obtained in this way.  Note that, due to the restricted nature of the skeletons, the embedding choices for skeletons are limited: S- and Q-nodes have a unique planar embedding, whereas in a P-node we can arbitrarily permute the order of the virtual edges and an R-node has a unique planar embedding up to reflection.
Rooting the SPQR-tree at a Q-node corresponding to some reference edge~$st$, defines for each non-root node~$\mu$ a unique \emph{parent edge} in~$\skel(\mu)$, namely the virtual edge that corresponds to the edge connecting~$\mu$ to its parent.  We call the endpoints of the parent edge of~$\skel(\mu)$ the poles of~$\mu$.  If we additionally restrict the choices of the planar embeddings of the skeletons so that the parent edge is incident to the outer face, the SPQR-tree represents exactly those planar embeddings of $G$ where the reference edge~$st$ is incident to the outer face.  The pertinent graph~$\pert(\mu)$ of a node~$\mu$ is the subgraph represented by~$\mu$ and all its children; it can be obtained by joining~$\mu$ and all nodes below~$\mu$.  The pertinent graph of the root is~$G$ itself.
\end{toappendix}

\section{Characterization of Graphs with \Rique-Number 1}
\label{sec:staq-1}

In this section, we discuss properties of graphs with \rique-number~1. We first characterize these graphs in \cref{lem:pattern-staq} in terms of the following forbidden~pattern.

\begin{enumerate}[label={P.\arabic*}]
\item \label{pattern-staq} 
Three edges $\langle e_a,e_b,e_c\rangle$ with $e_a=(a,a')$, $e_b=(b,b')$ and $e_c=(c,c')$ form Pattern~\ref{pattern-staq} in a linear layout if and only if $a \prec b \prec c \prec b' \prec \{a',c'\}$;~see~\cref{fig:pattern-staq}.
\end{enumerate}

\begin{figure}[b]
    \centering
    \includegraphics{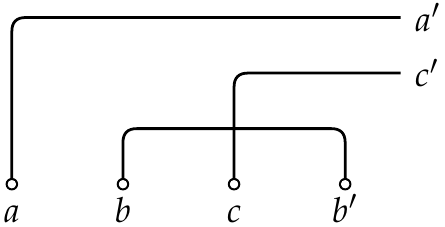}
    \caption{Forbidden Pattern~\ref{pattern-staq}}
    \label{fig:pattern-staq}
\end{figure}

\begin{lemma}\label{lem:pattern-staq}
  A graph has \rique-number 1 if and only if it admits a linear order avoiding Pattern~\ref{pattern-staq}.
\end{lemma}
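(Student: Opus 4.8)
The plan is to prove both directions by relating the $\SQ$ data structure's feasibility directly to the combinatorial condition on the linear order. Recall that in a $\SQ$, insertions happen only at the head, while removals happen at either the head or the tail. So an edge $(u,v)$ with $u\prec v$ is inserted when we process $u$ and must be removed when we process $v$, either from the head or from the tail. The key structural observation I would establish first is the following invariant: at any moment during the processing, the elements currently stored in the $\SQ$ form a contiguous block in insertion order (youngest at the head, oldest at the tail), and an element can only be removed if it is currently either the youngest or the oldest surviving element. Equivalently, if edge $e$ is inserted strictly between the insertion times of two edges $e'$ and $e''$ (both still present), then $e$ cannot be removed before at least one of $e'$, $e''$ is removed.

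For the forward direction (\rique-number $1$ $\Rightarrow$ no Pattern~\ref{pattern-staq}), I would argue by contraposition: suppose a linear order contains three edges $e_a=(a,a')$, $e_b=(b,b')$, $e_c=(c,c')$ with $a\prec b\prec c\prec b'\prec\{a',c'\}$. Then when we reach the vertex $b'$, edge $e_b$ must be removed, but $e_a$ was inserted before $e_b$ (since $a\prec b$) and is still present (since $b'\prec a'$), and $e_c$ was inserted after $e_b$ (since $b\prec c$) and is still present (since $b'\prec c'$). Thus $e_b$ is neither the head nor the tail element, so it cannot be removed — contradiction. This shows any valid $\SQ$-layout avoids the pattern.

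For the converse (no Pattern~\ref{pattern-staq} $\Rightarrow$ \rique-number $1$), given a linear order $\prec$ avoiding the pattern, I would construct the schedule of head/tail removals explicitly and verify it is feasible. Process the vertices in the order $\prec$; when processing a vertex $v$, first perform all removals of edges that terminate at $v$, then perform all insertions of edges originating at $v$. The claim is that the edges terminating at $v$ that are still in the $\SQ$ are exactly the edges that are currently at the head end or the tail end of the queue, and moreover they can be removed in a consistent order (some suffix of the "young" end and some prefix of the "old" end). The pattern-avoidance hypothesis is precisely what guarantees this: if some edge $e_b$ terminating at $v=b'$ were sandwiched between two surviving edges $e_a$ (older) and $e_c$ (younger), the vertices would realize Pattern~\ref{pattern-staq}, contradicting the hypothesis. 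So for each $v$, the set of edges removed at $v$ partitions into a "head part" and a "tail part" with nothing live in between them at the moment of removal, which is exactly what the $\SQ$ allows.

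The main obstacle I anticipate is making the converse direction fully rigorous: one must be careful about the order in which multiple edges are removed at the same vertex $v$, and about the bookkeeping showing that after these removals the surviving edges still form a contiguous block so the invariant is maintained for later steps. Concretely, I would track the set $S_t$ of live edges as an interval of insertion-indices, show that removing a head-part and a tail-part keeps $S_t$ an interval, and argue by induction on the processed prefix that no edge ever needs to be removed from the middle — the inductive step being a direct translation of "no Pattern~\ref{pattern-staq}" into the statement "no live edge has both an older live edge and a younger live edge that both outlive it." Once this invariant is in place, assigning each removal to head or tail is immediate, and the single page suffices.
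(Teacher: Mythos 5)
Your proposal is correct and matches the paper's argument: the forward direction is the identical "sandwiched edge cannot be removed" contradiction, and your converse is just the contrapositive phrasing of the paper's (the paper shows \rique-number $>1$ forces the pattern in every order, which is the same key insight that a blocked removal yields an older and a younger surviving witness). Your extra bookkeeping about the removal order at a single vertex (the removed set must be a prefix plus a suffix of the live interval, which stays an interval) is a welcome refinement of a point the paper treats informally, but it is not a different approach.
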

\begin{proof}
Let $G$ be a graph with \rique-number 1 and assume for a contradiction that a linear order of it contains Pattern~\ref{pattern-staq}. 
The edge $e_a$ is inserted into data structure $\SQ$ before the edge $e_b$ is inserted, but removed after $e_b$ is removed. Hence, $e_b$ cannot be removed at the tail of $\SQ$, so it has to be removed at its head.
However, the edge $e_c$ is inserted after the edge $e_b$ is inserted, but also removed after $e_b$ is removed, so $e_b$ also cannot be removed at the head; a contradiction.

For the other direction, assume that $G$ has \rique-number greater than 1.~We will prove that every linear order of $G$ contains Pattern~\ref{pattern-staq}. Let  $\prec$ be such an order. Since $G$ has \rique-number greater than 1 and all insertions into a $\SQ$ happen on the same side, at some time $b'$ there is an edge $e_b$ to be removed that is neither at the head nor at the tail of $\SQ$. Since $e_b$ is not~at~the head, there is some other edge $e_c$ that was inserted into $\SQ$ after $e_b$ and is still there at time~$b'$.
Since $e_b$ is not at the tail, there is some other edge $e_a$ that was inserted into $\SQ$ before $e_b$ and still is  there. Then, $\langle e_a,e_b,e_c\rangle$ form Pattern~\ref{pattern-staq}.
\end{proof}

\noindent We are now ready to completely characterize the graphs with \rique-number~$1$. 

\begin{theorem}\label{thm:characterization}
A graph $G$ has \rique-number 1 if and only if~$G$ is planar strongly 1-sided subhamiltonian.
\end{theorem}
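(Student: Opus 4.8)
The plan is to invoke \cref{lem:pattern-staq} and prove that $G$ admits a linear order avoiding Pattern~\ref{pattern-staq} if and only if $G$ is planar strongly 1-sided subhamiltonian. The bridge between the two sides is a \emph{disk--chord dictionary}: if $H$ is a plane graph with a Hamiltonian path $P=v_1,\dots,v_n$, then $P$ (as drawn) is a simple arc, so its complement is an open disk $D$ whose boundary traverses the copies $v_1, v_2^{L},\dots,v_{n-1}^{L}, v_n, v_{n-1}^{R},\dots,v_2^{R}$, where $v_i^{L},v_i^{R}$ are the two sides of $v_i$ along $P$; the non-path edges of $H$ are pairwise non-crossing chords of $D$. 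A non-path edge $(v_i,v_j)$ with $i<j$ attaches at $v_i$ at $v_i^{L}$ or $v_i^{R}$ (the single copy if $i=1$), and at $v_i^{L}$ exactly when it leaves $v_i$ on the left of $P$. Hence $P$ is strongly 1-sided iff every non-path edge attaches at the left copy of its lower-indexed endpoint; equivalently, $G$ is planar strongly 1-sided subhamiltonian iff, for the vertex order of a Hamiltonian path of some planar $H\supseteq G$, one can attach each non-path edge at one of the two copies of its upper endpoint so that all resulting chords are pairwise non-crossing. Since appending the edges $(v_i,v_{i+1})$ to a Pattern-\ref{pattern-staq}-free order keeps it Pattern-\ref{pattern-staq}-free (a consecutive edge cannot play any of the roles $e_a,e_b,e_c$), for the ``order $\Rightarrow$ embedding'' direction it suffices to take $H=G$ together with all edges $(v_i,v_{i+1})$.

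For ``embedding $\Rightarrow$ order'', I take as $\prec$ the vertex order $v_1,\dots,v_n$ of a Hamiltonian path of a planar supergraph $H\supseteq G$ witnessing strong 1-sidedness and show no Pattern~\ref{pattern-staq} occurs in $H$ (hence none in $G$, a subgraph with a sub-order). Suppose $e_a=(a,a')$, $e_b=(b,b')$, $e_c=(c,c')$ form the pattern, so $a\prec b\prec c\prec b'$, $b'\prec a'$, $b'\prec c'$, and $b\ne v_1$. The chord of $e_b$ attaches at $b^{L}$ and at either $b'^{L}$ or $b'^{R}$. If at $b'^{L}$, its endpoints are separated on $\partial D$ exactly by the copies $\{v_k^{L}: b\prec v_k\prec b'\}$; the chord of $e_c$ has its lower endpoint $c^{L}$ among these and, since $b'\prec c'$, its upper endpoint outside them, so $e_c$ and $e_b$ cross. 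If at $b'^{R}$, the chord of $e_b$ splits $D$ with $v_1$ and the lower endpoint of $e_a$ on one side and, since $b\prec c\prec b'\prec\min\{a',c'\}$, the lower endpoint of $e_c$ and all copies available to the upper endpoints of $e_a$ and $e_c$ on the other side, so $e_a$ and $e_b$ cross. Either way planarity of $H$ is contradicted.

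For ``order $\Rightarrow$ embedding'', fix a Pattern-\ref{pattern-staq}-free order, add all edges $(v_i,v_{i+1})$, and (by \cref{lem:pattern-staq}) fix a valid single-page \rique-layout of the resulting graph $H$; call a non-path edge a \emph{head-edge} or a \emph{tail-edge} according to the end of the \rique\ at which it is removed. The absence of Pattern~\ref{pattern-staq} yields: (i) no head-edge $(v_i,v_j)$ is crossed from the right, i.e.\ no edge $(v_{i'},v_{j'})$ has $i<i'<j<j'$ (such an edge would still be in the \rique\ when $(v_i,v_j)$ is removed at the head); and (ii) no tail-edge $(v_i,v_j)$ is nested inside an edge, i.e.\ no edge $(v_p,v_q)$ has $p<i<j<q$ (such an edge would still be in the \rique\ when $(v_i,v_j)$ is removed at the tail). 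Now form the chord diagram on the disk $D$ of the path $v_1,\dots,v_n$: attach every non-path edge at the left copy of its lower endpoint, and at its upper endpoint at the left copy if it is a head-edge and at the right copy if it is a tail-edge. Inspecting the boundary order, two head-edges cross iff their intervals interleave (excluded by (i)), two tail-edges cross iff one interval nests in the other (excluded by (ii)), and a head-edge and a tail-edge cross iff the tail-edge is nested inside the head-edge (a special case of (ii)) or crosses it from the right (a special case of (i)). Hence the chords are pairwise non-crossing, $H$ is planar with $v_1,\dots,v_n$ a strongly 1-sided Hamiltonian path, and $G\subseteq H$ is planar strongly 1-sided subhamiltonian.

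I expect the ``order $\Rightarrow$ embedding'' direction to be the main obstacle, and within it the delicate part is the bookkeeping that the interval picture glosses over: making the head/tail classification well defined requires fixing the order in which simultaneous insertions and removals are performed, and properties~(i), (ii) together with the non-crossing check must be verified in the presence of edges sharing an endpoint (several head- or tail-edges ending at one vertex, or a head-edge and a tail-edge sharing an endpoint) and for the boundary vertices $v_1$ and $v_n$, which occupy a single copy on $\partial D$. Establishing (i) and (ii) for every valid layout and checking every such degenerate configuration is where the real work lies; the ``embedding $\Rightarrow$ order'' direction is comparatively short once the disk--chord dictionary is in place.
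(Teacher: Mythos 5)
Your overall strategy coincides with the paper's: reduce everything to \cref{lem:pattern-staq}, show that the vertex order of a strongly 1-sided Hamiltonian path avoids Pattern~\ref{pattern-staq} because either attachment of $e_b$ at $b'$ forces a crossing with $e_c$ or with $e_a$, and conversely turn a valid one-page layout of a pattern-free order (with spine edges added) into a plane drawing by routing head-edges on the left of the path and letting tail-edges wrap around to the right. Your disk--chord dictionary is a repackaging of the paper's explicit rotation system around each vertex, and your properties~(i) and~(ii) are exactly the feasibility arguments the paper invokes (an edge inserted earlier at the head cannot leave the head earlier; an edge inserted later cannot leave the tail earlier). Your justification that spine edges cannot play any of the roles $e_a,e_b,e_c$ is correct and is a detail the paper leaves implicit.

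There is, however, one concrete hole in the non-crossing verification, sitting precisely in the territory you flag but defer. Your claim that a head-edge and a tail-edge cross only if the tail-edge is nested inside the head-edge or crosses it from the right fails when the two edges end at the same vertex: for a head-edge $(v_i,v_j)$ and a tail-edge $(v_p,v_j)$ with $i<p<j$, the chords $v_i^{L}v_j^{L}$ and $v_p^{L}v_j^{R}$ interleave on the disk boundary (the tail-edge starts inside the region enclosed by the head-edge and the path and must escape past $v_n$), yet neither (i) nor (ii) applies, since both require the fourth index to be strictly larger. This configuration is not excluded by your stated invariants; it must be ruled out by a separate feasibility argument: at $v_j$ the later-inserted $(v_p,v_j)$ blocks $(v_i,v_j)$ from the head while $(v_i,v_j)$ blocks $(v_p,v_j)$ from the tail, so no removal order at $v_j$ is possible. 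The paper's case analysis covers this automatically because for a head-edge crossed by a tail-edge $(v_k,v_\ell)$ it only uses $i<k<j$, with no constraint on $\ell$. The other degenerate cases you list (several chords attached at the same boundary copy) are indeed benign fan orderings, but this one is not, so the deferred ``bookkeeping'' genuinely requires strengthening (i)--(ii) rather than merely re-checking them.
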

\begin{proof}
First, assume that $G$ can be embedded so that it contains a strongly 1-sided subhamiltonian path $v_1,\ldots,v_n$. For a contradiction, assume further that $\langle e_a=(a,a'),e_b=(b,b'),e_c=(c,c')\rangle$ form Pattern~\ref{pattern-staq} in the order $v_1,\ldots,v_n$.
Note that $e_a$, $e_b$, and $e_c$ leave $a$, $b$, and $c$ on the left side, respectively.
If $e_b$ enters $b'$ from the left, then $e_b$ crosses~$e_c$ as $b\prec c\prec b'\prec c'$. So, $e_b$ has to enter $b'$ from the right. Then, however, $e_b$~crosses $e_a$ since $a\prec b\prec b'\prec a'$; a contradiction. So, %
by \cref{lem:pattern-staq}, $G$ has~\rique-number~1.
  
Assume now that $G$ has \rique-number 1. By \cref{lem:pattern-staq}, $G$ admits a linear order $v_1,\ldots,v_n$ avoiding Pattern~\ref{pattern-staq}. 
W.l.o.g.\ we assume that $G$ contains all edges in $\{(v_1,v_2),\ldots,(v_{n-1},v_n)\}$ and prove that $G$ is strongly 1-sided Hamiltonian.

Consider a vertex $v_i$. We order the edges around $v_i$ counter-clockwise as follows; see \cref{fig:order-around-vertex}.
  \begin{enumerate*}[label=(\roman*)]
      \item The edge $(v_i,v_{i+1})$ (for $i<n$);
      \item the outgoing head-edges of $v_i$, ordered in increasing order by the
        index of the target vertex;
      \item the outgoing tail-edges of $v_i$, ordered in decreasing order by the
        index of the target vertex;
      \item the incoming head-edges of $v_i$, ordered in increasing order by the
        index of the source vertex;
      \item the edge $(v_{i-1},v_i)$ (for $i>1$);
      \item the incoming tail-edges of $v_i$, ordered in increasing order by the
        index of the source vertex. 
  \end{enumerate*}
  This ensures that all edges leave $v_i$ on the correct side of the Hamiltonian path. It remains to be shown that this embedding is plane. To this end, assume that there are two edges $(v_i,v_j)$ and $(v_k,v_\ell)$ that cross. W.l.o.g.\ we assume that $i<k$. %
  
\begin{figure}[t]
    \centering
    \includegraphics{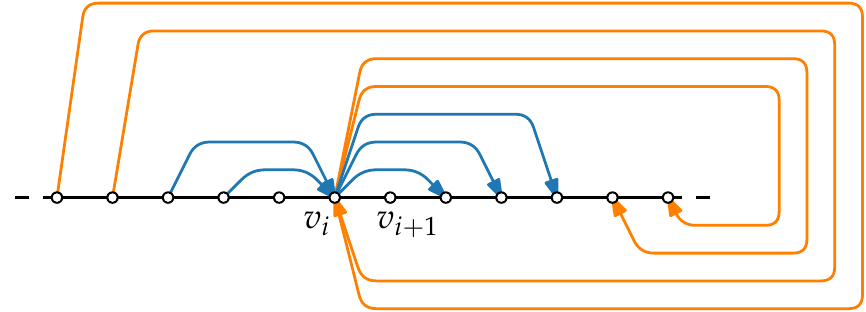}
    \caption{Ordering of the edges around a vertex $v_i$.}
    \label{fig:order-around-vertex}
\end{figure}
  
  If $(v_k,v_\ell)$ is a head-edge, then it leaves and enters $v_k$ and $v_\ell$ on the same side of the Hamiltonian path as $(v_i,v_j)$ leaves $v_i$. Hence,
  $(v_i,v_j)$ and $(v_k,v_\ell)$ cross only if $(v_i,v_j)$ also enters $v_j$ on the same side. So, $(v_i,v_j)$ is also a head-edge with $i<k<j<\ell$. However, since $(v_i,v_j)$ entered $\SQ$ at the head before $(v_k,v_\ell)$, it cannot leave $\SQ$ at the head before $(v_k,v_\ell)$; a contradiction.
  
  If $(v_k,v_\ell)$ is a tail-edge, then $(v_i,v_j)$ leaves $v_i$ on the same side of the Hamiltonian path as $(v_k,v_\ell)$ leaves $v_k$, but $(v_k,v_\ell)$ enters $v_\ell$ on the other side. 
  If $(v_i,v_j)$ is a head-edge, then we must have $i<k<j$. However, since $(v_i,v_j)$ entered $\SQ$ at the head before $(v_k,v_\ell)$, it cannot leave $\SQ$ at the head before $(v_k,v_\ell)$; a contradiction.
  Otherwise $(v_i,v_j)$ is a tail-edge, and we must have $i<k<\ell<j$. However, since $(v_i,v_j)$ entered $\SQ$ at the head before $(v_k,v_\ell)$, it cannot leave $\SQ$ at the tail after $(v_k,v_\ell)$; a contradiction.
 
It follows that no two edges cross, as desired. This concludes the proof.  
\end{proof}

The definition of a \rique implies $\texttt{X} \subseteq \SQ \subseteq \St+\Qu$, where $\texttt{X} \in \{\St,\Qu\}$. By~\cref{thm:characterization}, both inclusions are strict, as $K_4 \in \SQ$ (see \cref{fig:staq}) but it admits neither a stack-layout (since it is not outerplanar~\cite{DBLP:journals/jct/BernhartK79}) nor a queue-layout (since any linear order yields a 2-rainbow~\cite{DBLP:journals/siamcomp/HeathR92}), and $K_{6}$ admits an $(\St,\Qu)$-layout~\cite{DBLP:journals/corr/abs-2107-04993} but is not planar and therefore $K_6 \notin \SQ$.%
\begin{observation}\label{obs:inclusions}
\normalfont $\texttt{X} \subsetneq \SQ \subsetneq \St+\Qu$, where $\texttt{X} \in \{\St,\Qu\}$
\end{observation}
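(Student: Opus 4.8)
The plan is to prove the two non-strict containments $\texttt{X}\subseteq\SQ\subseteq\St+\Qu$ and then, using \cref{thm:characterization}, to separate the classes with the complete graphs $K_4$ and $K_6$.

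For $\texttt{X}\subseteq\SQ$ I would argue purely from the data-structure operations: a stack is precisely a \rique\ that never performs the remove-at-the-tail operation, and a queue is precisely a \rique\ that never performs the remove-at-the-head operation. Consequently any $(\St)$- or $(\Qu)$-layout is already a $\SQ$-layout on the same vertex order and the same single page, so $\St\subseteq\SQ$ and $\Qu\subseteq\SQ$.

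For $\SQ\subseteq\St+\Qu$ I would start from a $\SQ$-layout of $G$ with order $\prec$, keep $\prec$, and partition the edges into the set $E_1$ of head-edges and the set $E_2$ of tail-edges in the sense of \cref{sec:preliminaries} (this is a partition, since every edge is removed exactly once and hence is a head-edge or a tail-edge but not both); it then remains to verify that $E_1$ is a valid stack page and $E_2$ a valid queue page in the order $\prec$. For $E_1$: suppose two head-edges $(a,a')$ and $(b,b')$ cross, say $a\prec b\prec a'\prec b'$. At the moment $a'$ the edge $(a,a')$ must be at the head of the $\SQ$, i.e.\ the most recently inserted edge still present; but $(b,b')$ was inserted later (as $a\prec b$) and is still present (as $a'\prec b'$), a contradiction. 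Hence no two head-edges cross and $E_1$ is a stack page. For $E_2$: suppose two tail-edges $(a,a')$ and $(b,b')$ are nested, say $a\prec b\prec b'\prec a'$. At the moment $b'$ the edge $(b,b')$ must be at the tail of the $\SQ$, i.e.\ the oldest edge still present; but $(a,a')$ was inserted earlier (as $a\prec b$) and is still present (as $b'\prec a'$), a contradiction. Hence no two tail-edges nest and $E_2$ is a queue page. I expect this to be the step requiring the most care, as it is the only place where one has to reason about positions inside the $\SQ$; the key observation that keeps the two checks independent is that a head-removal always deletes the current head element and a tail-removal always deletes the current tail element, so removing tail-edges never disturbs the head side and vice versa.

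Finally, for the strictness I would use $K_4$ and $K_6$ exactly as indicated before the statement. We have $K_4\in\SQ$ by \cref{thm:characterization} (it is planar and strongly $1$-sided subhamiltonian; see \cref{fig:staq}), while $K_4\notin\St$ since graphs of stack-number $1$ are exactly the outerplanar graphs~\cite{DBLP:journals/jct/BernhartK79} and $K_4$ is not outerplanar, and $K_4\notin\Qu$ since every linear order of four vertices $a\prec b\prec c\prec d$ contains the $2$-rainbow $\{(a,d),(b,c)\}$, which no queue page admits~\cite{DBLP:journals/siamcomp/HeathR92}; this gives $\St\subsetneq\SQ$ and $\Qu\subsetneq\SQ$. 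And $K_6\in\St+\Qu$~\cite{DBLP:journals/corr/abs-2107-04993} whereas $K_6\notin\SQ$, because by \cref{thm:characterization} every graph in $\SQ$ is planar and $K_6$ is not; this gives $\SQ\subsetneq\St+\Qu$, completing the argument.
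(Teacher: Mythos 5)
Your proof is correct and follows essentially the same route as the paper: the containments are read off from the definition of a \rique, and strictness is witnessed by $K_4$ and $K_6$ exactly as in the text preceding the observation. Your explicit head-edge/tail-edge partition establishing $\SQ\subseteq\St+\Qu$ (no two head-edges cross, no two tail-edges nest) is a sound elaboration of a step the paper treats as immediate from the definition.
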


\section{Recognition of graphs with \Rique-Number 1}
\label{sec:recognition}

With the characterization of \cref{thm:characterization} at hand, we now turn our focus to the recognition problem, where we present two algorithms:   
\begin{enumerate*}[label=(\roman*)]
\item the first one is simple and tests whether a plane graph is strongly $1$-sided Hamiltonian, while
\item the second one is more elaborate and tests whether a planar graph is strongly $1$-sided Hamiltonian.  
\end{enumerate*}
Even though our algorithms do not solve the general case of testing whether a graph has \rique-number $1$ (or equivalently by \cref{thm:characterization} whether it is strongly $1$-sided subhamiltonian), they can be leveraged for testing, e.g., whether a maximal planar graph or a $3$-connected planar graph has \rique-number~$1$. 

\begin{theorem}\label{thm:plane-staq-ham}
Given a plane $n$-vertex graph $G$, there is an $O(n^2)$-time algorithm to test whether $G$ is plane strongly 1-sided Hamiltonian.
\end{theorem}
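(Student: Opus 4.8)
The plan is to work directly with the combinatorial structure of the fixed plane embedding of $G$ and reduce the existence of a strongly 1-sided Hamiltonian path to a search problem. Recall from \cref{thm:characterization} that a strongly 1-sided Hamiltonian path is, up to the choice of the ``left'' side, a Hamiltonian path $v_1,\ldots,v_n$ such that every non-initial vertex $v_i$ (with $1<i<n$) has all its forward neighbours $v_j$, $j>i$, sitting in the angular wedge between $(v_{i-1},v_i)$ and $(v_i,v_{i+1})$ on the designated side. The key observation is that, since the embedding is fixed, the relative rotation of the edges around each vertex is already determined; so once we commit to the first edge $(v_1,v_2)$ of the path and to which side is ``left'', the constraint at each subsequent vertex becomes a purely local test on rotation systems. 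I would first formalize this: given a candidate directed starting edge $(v_1,v_2)$ and a side, a prefix $v_1,\ldots,v_i$ is \emph{valid} if (a) it is a simple path in $G$, and (b) every vertex $v_2,\ldots,v_{i-1}$ already has all of its $G$-neighbours with larger index contained in its left wedge, which at the time of extension is a condition that can only be violated by neighbours \emph{not yet} on the path. This monotonicity is what makes a greedy/propagation approach viable.

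The main structural step is to show that, for a fixed choice of $(v_1,v_2)$ and side, the strongly 1-sided Hamiltonian path — if one exists — is essentially \emph{forced}, so that the algorithm only has to follow the unique continuation and check feasibility. Concretely, suppose $v_1,\ldots,v_i$ is the current prefix. The next vertex $v_{i+1}$ must be a neighbour of $v_i$; moreover the one-sidedness condition, together with planarity of the embedding, pins down which neighbour: $v_{i+1}$ must be the neighbour of $v_i$ appearing first (in the left-to-right order of the rotation at $v_i$, starting just after $(v_{i-1},v_i)$) among those neighbours that have not yet appeared on the path, because any neighbour of $v_i$ skipped now would later have to enter $v_i$'s wedge from a vertex of larger index, which the rotation at $v_i$ forbids once $(v_i,v_{i+1})$ is placed. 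I would prove this ``forced successor'' lemma carefully, handling the degenerate cases where $v_i$ has no admissible unused neighbour (then this branch fails) or where following the forced successor would create a repeated vertex (also a failure). Given the lemma, the algorithm is: for each of the $O(n)$ choices of starting directed edge at (say) a fixed vertex $v_1$ of degree $\le$ something — actually for each of the $O(m)=O(n)$ directed edges of $G$, and for the (single, by symmetry) side choice — run the forced propagation for up to $n$ steps, at each step spending $O(\deg(v_i))$ time to locate the forced successor and $O(\deg(v_i))$ time to verify the wedge condition against already-used vertices. Summed over one run this is $O(n + m) = O(n)$, and over all $O(n)$ starting edges this gives $O(n^2)$.

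I expect the main obstacle to be the ``forced successor'' lemma and, relatedly, correctly characterizing the \emph{local} certificate that a vertex's one-sidedness is irrevocably satisfied once we move past it — i.e.\ proving that the wedge condition at $v_i$ depends only on the set of vertices already on the path and not on the order in which the remaining vertices will be appended. The subtlety is that the condition in \cref{thm:characterization} talks about edges $(v_i,v_j)$ with $j>i$ lying between $(v_{i-1},v_i)$ and $(v_i,v_{i+1})$ in clockwise order; when we fix the prefix up to $v_{i+1}$, both bounding edges of the wedge are determined, so the set of $G$-edges at $v_i$ that fall inside the wedge is determined, and we just need every such edge to go to a vertex of index $>i$ (equivalently, to an unused vertex at the moment $(v_i,v_{i+1})$ is placed) and every $G$-edge at $v_i$ \emph{outside} the wedge on the forward side to go to a vertex of index $<i$. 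Pinning this down — including the boundary case of edges collinear with $(v_{i-1},v_i)$ or $(v_i,v_{i+1})$, and the endpoints $v_1$ (where the wedge is a half-plane) and $v_n$ (where there is no outgoing path edge) — is the delicate part; the rest is bookkeeping. A secondary point worth a remark is that one may restrict the $O(n)$ starting edges to those incident to a single vertex only if that vertex is guaranteed to be an endpoint of the path, which it is not; hence iterating over all directed edges (still $O(n)$ many in a planar graph) is what yields the stated $O(n^2)$ bound.
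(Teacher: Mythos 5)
Your proposal is correct and follows essentially the same route as the paper: guess the $O(n)$ starting (directed) edges, then show the continuation is forced by taking, in the rotation at $v_i$ starting just after $(v_{i-1},v_i)$, the first edge whose endpoint is not yet on the path (earlier edges would revisit a vertex, later ones would leave the skipped edge on the wrong side), giving $O(n)$ per start and $O(n^2)$ overall. Your ``forced successor'' lemma is exactly the paper's uniqueness claim, with the same justification.
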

\begin{proof}
After guessing the first edge of the path, for which there are $O(n)$ choices, we assume that we have computed a subpath~$\pi = v_1,\dots,v_i$, $2 \le i < n$ of a strongly 1-sided Hamiltonian path of $G$. We claim that the next vertex~$v_{i+1}$ is uniquely determined by~$\pi$.  Consider the edges of $G$ incident to~$v_{i}$ in counterclockwise order, starting from the edge after~$(v_{i-1},v_i)$.  Let~$e$ be the first edge in this order, whose other endpoint does not lie on~$\pi$.  We choose this endpoint as~$v_{i+1}$.  This is correct, since choosing an endpoint of an edge preceding~$e$ visits a vertex twice, whereas choosing an endpoint of an edge succeeding~$e$ would imply that $e$ leaves the resulting path on the wrong side.
The above argument shows that, after guessing an initial edge, the remainder of the 1-sided Hamiltonian path is uniquely defined, if it exists.  Since a single starting edge can be tested in $O(n)$ time, the overall time complexity of our algorithm is $O(n^2)$.
\end{proof}

\begin{corollary}
Given a maximal planar graph $G$ with $n$ vertices, there is an $O(n^2)$-time algorithm to test whether $G$ has \rique-number~$1$.
\end{corollary}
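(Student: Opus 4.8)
The plan is to reduce the statement to the plane case already settled by \cref{thm:plane-staq-ham}. By \cref{thm:characterization}, $G$ has \rique-number~$1$ if and only if it is planar strongly $1$-sided subhamiltonian, i.e., some planar spanning supergraph $H$ of $G$ admits an embedding containing a strongly $1$-sided Hamiltonian path. The first observation I would make is that, since $G$ is maximal planar, no edge can be added to~$G$ without destroying planarity, so the only possible choice is $H=G$; moreover, a Hamiltonian path of $H=G$ uses only edges of~$G$. Hence, for maximal planar~$G$, being strongly $1$-sided subhamiltonian coincides with being strongly $1$-sided Hamiltonian: $G$ has \rique-number~$1$ iff some planar embedding of~$G$ contains a strongly $1$-sided Hamiltonian path.

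Next I would use that a maximal planar graph on $n \ge 4$ vertices is $3$-connected, hence by Whitney's theorem has a unique planar embedding up to reflection and the choice of the outer face. The strongly $1$-sided Hamiltonian property is purely a property of the rotation system: it requires, for every internal vertex $v_i$, that each edge $(v_i,v_j)$ with $j>i$ lie in the clockwise sector from $(v_{i-1},v_i)$ to $(v_i,v_{i+1})$. In particular it does not depend on which face is chosen as the outer one, and passing to the mirror image merely exchanges the two admissible sides. Therefore it suffices to fix one combinatorial embedding of~$G$ (computable in $O(n)$ time) and to run the plane test on this plane graph together with its mirror image. The cases $n\le 3$ are trivial, since then the condition on edges $(v_i,v_j)$ with $1<i<j$ is vacuous and $G$ has \rique-number~$1$.

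Finally, for each of these (at most two) plane graphs I would invoke the $O(n^2)$-time algorithm of \cref{thm:plane-staq-ham}; its outer loop, which ranges over all candidate first edges, already accounts for both directions of a Hamiltonian path, so no additional branching is needed. We answer ``yes'' precisely when at least one of the runs reports a strongly $1$-sided Hamiltonian path. The total running time is $O(n)+O(n^2)=O(n^2)$.

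The delicate point here is conceptual rather than computational: one must justify the reduction ``subhamiltonian $\Rightarrow$ Hamiltonian'' from edge-maximality, and, relatedly, argue that a single rotation system suffices to inspect. Concretely, I would verify that the embedding constructed in the proof of \cref{thm:characterization} is a planar embedding of~$G$ on its original vertex set (only path edges are introduced, and for maximal planar~$G$ these are already present), and that $1$-sidedness is invariant under the choice of the outer face, so that Whitney uniqueness leaves only reflection to be checked. Once these observations are in place, the corollary follows immediately from \cref{thm:plane-staq-ham}.
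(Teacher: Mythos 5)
Your proposal is correct and follows exactly the route the paper intends (the corollary is stated without proof as an immediate consequence of \cref{thm:plane-staq-ham}): edge-maximality collapses ``subhamiltonian'' to ``Hamiltonian'', and $3$-connectedness plus the invariance of the $1$-sidedness condition under the choice of outer face reduces the planar test to running the plane algorithm on one fixed embedding and its reflection. The details you flag as delicate (only spine edges are added in the proof of \cref{thm:characterization}, and for maximal planar $G$ these must already be present) are precisely the right ones to check.
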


\begin{figure}[t]
    \centering
  \begin{subfigure}{.47\linewidth}
    \includegraphics[page=1]{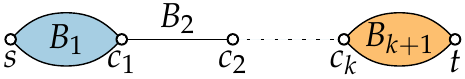}
    \caption{}
  \end{subfigure}
  \hfill
  \begin{subfigure}{.47\linewidth}
    \includegraphics[page=2]{figures/blockcut.pdf}
    \caption{}
  \end{subfigure}
  \begin{subfigure}{.47\linewidth}
    \includegraphics[page=3]{figures/blockcut.pdf}
    \caption{}
  \end{subfigure}
    \caption{(a) A block-cut tree;
    (b) a strongly 1-sided Hamiltonian embedding for each block; 
    (c) a strongly 1-sided Hamiltonian embedding for the whole graph.}
    \label{fig:blockcut}
\end{figure}

\begin{theorem}
Given a planar $n$-vertex graph $G$, there is an $O(n^4)$-time algorithm to test whether $G$ is planar strongly 1-sided Hamiltonian.
\end{theorem}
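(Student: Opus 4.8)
The plan is to reduce the problem in two stages: first to biconnected graphs via the block–cut tree, and then, inside a biconnected graph, to a bottom-up dynamic program over the SPQR-tree. For the first stage I would observe that if $G$ has a Hamiltonian path at all then its block–cut tree is a path $B_1,c_1,B_2,\dots,c_{m-1},B_m$ (each block has at most two cut vertices and the blocks are traversed consecutively, with each block's vertices appearing contiguously along the path), and conversely that a strongly 1-sided Hamiltonian path of $G$ restricts to a Hamiltonian path of each block: the $j$-th subpath runs between the prescribed cut vertices $c_{j-1},c_j$ for $2\le j\le m-1$, while the outermost endpoints $v_1\in B_1$ and $v_n\in B_m$ are free to choose ($O(n^2)$ guesses; note that $v_1$ is the unique vertex exempt from the left-side rule, so which end is which matters). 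Conversely, strongly 1-sided Hamiltonian embeddings of the blocks with the correct endpoints can be glued at the cut vertices: since $c_j$ is the last vertex of $B_j$'s subpath and the first of $B_{j+1}$'s, the edges of $B_j$ at $c_j$ all point backwards and those of $B_{j+1}$ at $c_j$ all point forwards, and since the edges of each block at $c_j$ form a contiguous arc in its own embedding one can place both $B_j\setminus c_j$ and $B_{j+1}\setminus c_j$ into the common left region of the path at $c_j$. Thus the remaining, and main, task is: given a biconnected graph with a prescribed pair of path endpoints, decide whether some planar embedding admits a (suitably constrained) Hamiltonian path between them; I aim to do this in $O(n^2)$ time, giving the claimed $O(n^4)$ overall.

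For the biconnected case I would root the SPQR-tree at a Q-node incident to the prescribed start vertex and run a dynamic program over it. The guiding observation — already used in \cref{thm:plane-staq-ham} — is that in a \emph{fixed} plane embedding a strongly 1-sided Hamiltonian path is uniquely determined by its first edge; hence after additionally guessing this first edge (contributing only $O(n)$ over the whole run) the only residual freedom is the embedding, which is exactly what the SPQR-tree encodes: permutations at P-nodes and reflections at R-nodes. For a node $\mu$ with poles $\{s,t\}$, the intersection of a strongly 1-sided Hamiltonian path of $G$ with the pertinent graph $\pert(\mu)$ is a set of at most two subpaths covering $V(\pert(\mu))$ whose endpoints lie in $\{s,t,v_1,v_n\}$; up to symmetry this yields only a constant number of combinatorial types (a single $s$–$t$ path; a single path from a pole through a global endpoint; two paths, from $s$ and from $t$, through the two global endpoints; plus a few degenerate cases). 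The state I would keep for $\mu$ is this type, a bit recording which pole the path reaches first, and, for each pole, a constant amount of ``side'' information — on which side of the not-yet-inserted parent virtual edge the sub-trace forces the rest of the graph to lie — which is precisely what is needed to verify the left-side rule at $s$ and $t$ once $\pert(\mu)$ is plugged into its parent. With a constant-size state and $O(n)$ nodes, the merges are cheap: an S-node is processed by a linear scan along its skeleton cycle, a P-node has at most three children in any feasible configuration and is handled in $O(1)$, and an R-node of skeleton size $k$ (fixed embedding up to reflection) is handled in $O(k^2)$ by running the plane-graph procedure of \cref{thm:plane-staq-ham} on its skeleton with the children's states attached to the virtual edges as gadgets. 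Summed over the tree this is $O(n^2)$ per biconnected instance, and $O(n^4)$ in total.

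The hard part will be the bookkeeping for the ``side'' information and its interaction with the embedding choices: making precise what it means for a sub-trace of $\pert(\mu)$ to be ``consistently oriented'' with respect to the left-side rule, and checking that gluing children at a P-node (arbitrary order of the parallel components) or at an R-node (fixed cyclic order, two reflections) preserves exactly the feasible states without blow-up. A prerequisite for this is the (easy but essential) remark that, although the left-side rule refers to the global linear order of the path, once the first edge and the embedding are fixed it is local at every vertex — only the rotation at the vertex and its two incident path edges are involved — so it can be verified incrementally during the DP. One must also be careful with the asymmetry caused by $v_1$ being the only unconstrained source and, in the block decomposition, with the fact that an interior cut vertex is the start of a block's subpath yet is still globally constrained; I would absorb both into the state by carrying an ``exempt-start'' flag, in the style of standard SPQR-tree dynamic programs for path and Hamiltonicity problems.
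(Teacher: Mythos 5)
Your architecture coincides with the paper's: guess the two path endpoints ($O(n^2)$ pairs), reduce along the block--cut tree (which must be a path, each block required to be 1-sided between its two designated cut/end vertices, glued at the cut vertices exactly as you describe), and for each biconnected block run a dynamic program over the SPQR-tree rooted at a Q-node incident to one prescribed endpoint, invoking the fixed-embedding algorithm of \cref{thm:plane-staq-ham} inside R-nodes whose skeletons must be traversed. The running-time accounting also matches.

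However, the proposal defers precisely the part that constitutes the paper's proof: the exact DP states and the merge rules at P-, S- and R-nodes. What you flag as ``the hard part'' (the side bookkeeping) is resolved in the paper by two observations you would need to supply. First, there is no free ``side'' bit to carry: the strongly 1-sided condition forces the sub-path entering $\pert(\mu)$ at a pole $x$ to begin with the edge immediately following the parent edge counterclockwise around $x$, so the state reduces to recording which ordered pair of poles (resp., when the start vertex lies in $\pert(\mu)$, which pole $x$ and which subset $Y$ of poles may be skipped and visited externally) admits such a traversal --- a set of constant size per node, with no extra orientation flags. Second, an R-node whose pertinent graph does not contain the start vertex can never be traversed pole-to-pole by a strongly 1-sided path: the path would have to follow a single face of the skeleton, forcing the triconnected skeleton to be outerplanar, which is impossible; this eliminates one of your cases outright instead of requiring a gadget. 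A further simplification you overlook: with the tree rooted at a Q-node incident to a global endpoint, at most one global endpoint can lie strictly inside any pertinent graph, so the intersection of the path with $\pert(\mu)$ is always a single nontrivial sub-path plus possibly an isolated pole; your ``two sub-paths through both global endpoints'' type never arises. As written, the proposal is a sound plan sharing the paper's route, but the omitted case analysis is the substance of the theorem, so it does not yet constitute a proof.
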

\begin{proof}
To prove the statement, we assume that the endpoints $s,t$ of the Hamiltonian path are specified as part of the input and we show that testing whether~$G$ admits a planar embedding containing a strongly 1-sided Hamiltonian $st$-path can be done in $O(n^2)$ time. In the positive case, we say that~$G$ is \emph{$st$-1-sided}.
    
If~$G$ is not biconnected, then for $G$ to be $st$-1-sided its block-cut tree must~be a path~$B_1,c_1,B_2,\dots,\allowbreak c_k, B_{k+1}$, such that $s \in B_1$ and~$t \in B_{k+1}$ (or vice versa; here~$k$ denotes the number of cutvertices of $G$).  We set~$c_0 = s$ and~$c_{k+1} = t$ and claim that~$G$ is $st$-1-sided if and only if each block~$B_i$ is $c_{i-1}c_{i}$-1-sided for~$i=1,\dots,k+1$.
The necessity is clear, we prove the sufficiency.   Let~$\mathcal E_i$ be a planar embedding of~$B_i$ containing a strongly 1-sided Hamiltonian $c_{i-1}c_i$-path~$p_i$ for~$i=1,\dots,k+1$.  We modify the embedding~$\mathcal E_i$ such that the first edge of~$p_i$ lies on the outer face, and combine $\mathcal E_{i-1}$ and $\mathcal E_i$ in such a way that the first edge of~$p_i$ follows the last edge of~$p_{i-1}$ in counterclockwise order around~$c_i$.  Then the path~$p$ obtained by concatenating~$p_i$, $i=1,\dots,k+1$ is a strongly 1-sided Hamiltonian path in the resulting embedding~$\mathcal E$ of~$G$; see \cref{fig:blockcut}.
    
Hence, we may assume that~$G$ consists of a single block.  Since the case where $G$ consists of a single edge can be handled trivially, we focus on the case where~$G$ is biconnected.  To determine whether~$G$ is $st$-1-sided, we use a dynamic program based on an SPQR-tree $\mathcal{T}$ of $G$.  We root $\mathcal{T}$ at an edge incident to~$t$ and for each node~$\mu$ of $\mathcal{T}$ with poles~$u,v$, we want to answer the following questions:  If~$s \notin \pert(\mu)$, we want to know for each of the two ordered pairs of poles~$(x,y) \in \{(u,v), (v,u)\}$ whether~$\pert(\mu)$ has an embedding with~$x,y$ on the outer face such that it contains a strongly 1-sided Hamiltonian path from~$x$ to~$y$ that starts with the edge that follows the parent edge counterclockwise around~$x$;  in the positive case.  We define the set $L(\mu)$ as those ordered pairs~$(x,y)$ where this is the case.  For a pair~$(x,y) \in L(\mu)$, we denote by~$\mathcal E_\mu(x,y)$ the corresponding embedding of~$\pert(\mu)$ and by~$P_\mu(x,y)$ the corresponding path. 
If~$s \in \pert(\mu)$, then for each~$x \in \{u,v\}$ and~$Y \subseteq \{u,v\} \setminus \{x\}$ we want to know whether~$\pert(\mu)$ has an embedding~$\mathcal E_\mu(x,Y)$ such that~$u,v$ are incident to the outer face and there is a strongly 1-sided path~$P_\mu(x,Y)$ from~$s$ to~$x$ that visits all vertices of~$\pert(\mu) - Y$. As above, for node~$\mu$, we define~$L(\mu)$ as the set of all pairs~$(x,Y)$ where this is possible.
   
Consider the root $r$ of $\mathcal{T}$ and let~$\mu$ be its child with poles~$u,t$.  Then $G$ is $st$-1-sided if and only if and only if $(t,\emptyset) \in L(\mu)$.  The necessity is clear.  For the sufficiency, observe that~$P_\mu(t,\emptyset)$ is a strongly 1-sided $st$-path in the embedding of~$G$ obtained from~$\mathcal E_\mu(t,\emptyset)$ by adding the edge~$ut$ in the outer face.
We compute the set~$L(\mu)$ for each node $\mu$ of $\mathcal{T}$ (together with corresponding embeddings of~$\pert(\mu)$ and paths) by a bottom-up traversal of~$\mathcal T$ as follows. Let $\mu$ be a node of~$\mathcal{T}$ in this traversal with poles~$u,v$. If $\mu$ is not a leaf in $\mathcal{T}$, we denote by $\mu_1,\dots,\mu_k$ its children, and we assume that~$L(\mu_i)$ has already been computed for~$i=1,\dots,k$. We next distinguish cases based on the type of~$\mu$.
    
    \begin{figure}[t]
        \centering
        \begin{subfigure}[b]{.4\textwidth}
        \centering
            \includegraphics[page=1]{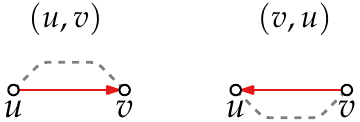}
            \caption{$u\neq s\neq v$}
            \label{fig:q-node-1}
        \end{subfigure}
        \hfill
        \begin{subfigure}[b]{.55\textwidth}
        \centering
            \includegraphics[page=2]{figures/q-node.pdf}
            \caption{$v=s$}
            \label{fig:q-node-2}
        \end{subfigure}
        \caption{The tuples of~$L(\mu)$ for a Q-node;  the corresponding paths are red.}
        \label{fig:q-node}
    \end{figure}
    
\smallskip\noindent\textbf{Case 1: $\mu$ is a Q-node.}    If~$u \ne s \ne v$, then~$L(\mu) = \{(u,v),(v,u)\}$. And for~$(x,y) \in L(\mu)$ the embedding~$\mathcal E_\mu(x,y)$ and the path~$P_\mu(x,y)$ are trivial; see \cref{fig:q-node-1}.  Otherwise, assume w.l.o.g.~$s = v$. Then~$L(\mu) = \{(v,\{u\}),(u,\emptyset)\}$.  Again for~$(x,Y) \in L(\mu)$, $\mathcal E_\mu(x,Y)$ and~$P_\mu(x,Y)$ can be defined trivially; see \cref{fig:q-node-2}.

    \begin{figure}[t]
        \centering
        \begin{subfigure}[b]{.45\textwidth}
        \centering
        \includegraphics[page=1]{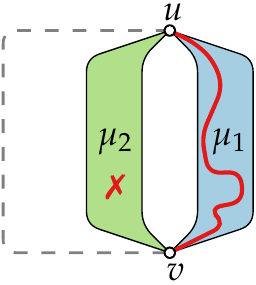}
        \hfill
        \includegraphics[page=2]{figures/p-node}
        \caption{$s\notin\pert(\mu)$ or $s=v$}
        \label{fig:p-node-1}
        \end{subfigure}
        \hfill
        \begin{subfigure}[b]{.45\textwidth}
        \centering
        \includegraphics[page=4]{figures/p-node}
        \hfill
        \includegraphics[page=3]{figures/p-node}
        \caption{$s\notin\{u,v\}, Y=\{v\}$}
        \label{fig:p-node-2}
        \end{subfigure}
        
        \medskip
        \begin{subfigure}[b]{\textwidth}
        \centering
        \includegraphics[page=6]{figures/p-node}
        \hfil
        \includegraphics[page=5]{figures/p-node}
        \hfil
        \includegraphics[page=7]{figures/p-node}
        \caption{$s\notin\{u,v\}, Y=\emptyset$}
        \label{fig:p-node-3}
        \end{subfigure}
        \caption{Paths~$P(v,u)$, $P(u,Y)$ in a P-node.  The vertices in~$Y$ are black.}
        \label{fig:p-node}
    \end{figure}
       
\smallskip\noindent\textbf{Case 2: $\mu$ is a P-node.} Assume first that $s\notin\pert(\mu)$; see \cref{fig:p-node-1}.  We show how to test whether~$(v,u) \in L(\mu)$. The case of $(u,v)$ is symmetric.  First~$(v,u) \in L(\mu)$ requires~$k=2$ and that only one of the children, say~$\mu_1$, is not a Q-node. If so, $(v,u) \in L(\mu)$ if and only if~$(v,u) \in L(\mu_1)$. Also, $P_\mu(v,u) = P_{\mu_1}(v,u)$~and $\mathcal E_\mu(v,u)$ is obtained by embedding the edge represented by~$\mu_2$ to the left of~$\mathcal E_{\mu_1}(v,u)$.
    
Now, consider the case that $s\in\pert(\mu)$. Assume first that~$s$ is a pole of $\mu$; see \cref{fig:p-node-1}. Then, any $1$-sided path of $\mu$ unavoidably visits the other pole. In fact, only a single child can be traversed, i.e., $k=2$, and one child, say~$\mu_2$, is a Q-node. If this is not the case, $L(\mu) = \emptyset$. Otherwise,~$L(\mu) = L(\mu_1)$.   For~$(x,Y) \in L(\mu_1)$, we set~$p_\mu(x,Y)=p_{\mu_1}(x,Y)$ and we define~$\mathcal E_\mu(x,Y)$ as the embedding obtained from~$\mathcal E_{\mu_1}(x,Y)$ by putting the edge represented by~$\mu_2$ to its left parallel to it.
    
Assume now that~$s$ is not a pole and it lies, w.l.o.g., in~$\pert(\mu_1)$.  Let~$(x,Y)$ be a pair with~$x \in \{u,v\}$, $Y \subseteq \{u,v\}\setminus\{x\}$.  W.l.o.g.\ we assume~$x=u$.  The case~$x=v$ is analogous.  Then either~$Y= \{v\}$ or~$Y = \emptyset$.
If~$v \in Y$ (see \cref{fig:p-node-2}), then~$(u,Y) \in L(\mu)$ if and only if~$(u,Y) \in L(\mu_1)$ and~$k=2$ and $\mu_2$ is a Q-node.  In that case, we set~$P_\mu(u,Y) = P_{\mu_1}(u,Y)$ and we define~$\mathcal E_\mu(u,Y)$ as the embedding obtained from~$\mathcal E_{\mu_1}(x,Y)$ by embedding the edge represented by~$\mu_2$ to its left.
If~$Y = \emptyset$ (see \cref{fig:p-node-3}), then we distinguish cases based on whether there is a second child, say~$\mu_2$, that is not a Q-node.  If there is none, then~$\mu_2$ is a Q-node and then~$(u,\emptyset) \in L(\mu)$ if and only if either $(u,\emptyset) \in L(\mu_1)$ or if~$(v,\{u\}) \in L(\mu_1)$.  In these cases, we set~$P_\mu(u,\emptyset) = P_{\mu_1}(u,\emptyset)$ or~$P_\mu(u,\emptyset) = P_{\mu_1}(v,\{u\}) \cdot (v,u)$.  The embedding~$\mathcal E_\mu(u,\emptyset)$ is obtained by embedding the edge represented by~$\mu_2$ on the left side of~$\mathcal E_{\mu_1}(u,\emptyset)$ or $\mathcal E_{\mu_1}(v,\{u\})$, respectively.
Otherwise~$\mu_2$ is not a Q-node.  It is then necessary that~$k \le 3$ and if~$\mu_3$ exists, it must be a Q-node.
Now,~$(u,\emptyset) \in L(\mu)$ if and only if~$(v,\{u\}) \in L(\mu_1)$ and~$(v,u) \in L(\mu_2)$.  In this case, we define~$P_\mu(u,\emptyset) = P_{\mu_1}(v,\{u\}) \cdot P_{\mu_2}(v,u)$ and the embedding $\mathcal E_\mu(u,\emptyset)$ is obtained by embedding~$\mathcal E_{\mu_2}(v,u)$ to the left of~$\mathcal E_{\mu_1}(v,\{u\})$ and the edge represented by~$\mu_3$, if it exists, to the left of that.
    
    \begin{figure}[t]
        \centering
        \begin{subfigure}[b]{.18\textwidth}
            \centering
            \includegraphics[page=1]{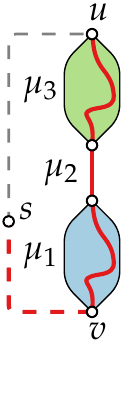}
            \caption{$s\notin\pert(\mu)$}
            \label{fig:s-node-1}
        \end{subfigure}
        \begin{subfigure}[b]{.15\textwidth}
            \centering
            \includegraphics[page=2]{figures/s-node}
            \caption{$s\in \mu_{i>2}$}
            \label{fig:s-node-2}
        \end{subfigure}
        \begin{subfigure}[b]{.4\textwidth}
            \centering
            \includegraphics[page=3]{figures/s-node}
            \includegraphics[page=4]{figures/s-node}
            \includegraphics[page=5]{figures/s-node}
            \caption{$s\in \mu_{2}$}
            \label{fig:s-node-3}
        \end{subfigure}
        \hfill
        \begin{subfigure}[b]{.23\textwidth}
            \centering
            \includegraphics[page=6]{figures/s-node}
            \includegraphics[page=7]{figures/s-node}
            \caption{$s\in \mu_{1}$}
            \label{fig:s-node-4}
        \end{subfigure}
        \caption{S-node}
        \label{fig:s-node}
    \end{figure}
    
    \smallskip\noindent\textbf{Case 3: $\mu$ is an S-node.}  
Let the children of $\mu$ be numbered so that~$v$ is a pole of~$\mu_1$.  Further, we denote by~$v_i$ the pole shared by~$\mu_i$ and~$\mu_{i+1}$ for~$i=1,\dots,k-1$. To ease the presentation, we also write~$v_0 = v$ and~$v_{k+1} = u$.
        
We start with the case that $s\notin\pert(\mu)$; see~\cref{fig:s-node-1}. We show how to test whether~$(v,u) \in L(\mu)$; the case of~$(u,v)$ is analogous. Then~$(v,u) \in L(\mu)$ if~and only if~$(v_{i-1},v_{i}) \in L(\mu_i)$ for~$i=1,\dots,k$.  In that case,~$P_\mu(v,u)$ is obtained by concatenating~$P_{\mu_i}(v_{i-1},v_i)$ for~$i=1,\dots,k$, while~$\mathcal E_\mu(v,u)$ is obtained by merging~$\mathcal E_{\mu_i}(v_{i-1},v_i)$ for~$i=1,\dots,k$.

Now, consider the case that~$s\in\pert(\mu)$.
Consider a pair~$(x,Y)$ as above.  We show the case~$x=u$, the case $x=v$ can be handled analogously.  If~$s=v$, then we cannot avoid visiting $s$, and we proceed as in the case of~$(v,u)$ where~$s$ is not in $\pert(\mu)$.  
Now consider the case that~$s$ is not a pole; see~\cref{fig:s-node-2,fig:s-node-3,fig:s-node-4}.  Let~$i$ be the smallest index so that~$s$ belongs to~$\pert(\mu_i)$ (observe that $s$ belongs to more than one pertinent graphs if and only if it is a vertex of~$\skel(\mu)$).  If~$i>2$, then $L(\mu) = \emptyset$, i.e., there is no path from~$s$ to~$x$ that visits $v_1$; see \cref{fig:s-node-2}.  Similarly, for $i=2$ we have~$(u,Y) \in L(\mu)$ if and only if~$\mu_1$ is a Q-node, $Y = \{v\}$, $(v_2,\emptyset) \in L(\mu_2)$, and~$(v_{j-1},v_{j}) \in L(\mu_j)$ for~$j=3,\dots,k$; see \cref{fig:s-node-3}. In this case, $P_\mu(x,Y)$ is composed by concatenating $P_{\mu_2}(v_2,\emptyset)$ with $P_{\mu_j}(v_{j-1},v_j)$ for $j=3,\ldots,k$, while the embedding $\mathcal E_\mu(x,Y)$ is obtained by merging the edge representing $\mu_1$ with $\mathcal E_{\mu_2}(v_2,\emptyset)$ with the embeddings of $\mathcal E_{\mu_j}(v_{j-1},v_{j})$ for $j=3,\ldots,k$
If~$i=1$, $(u,Y) \in L(\mu)$ if and only if~$(v_1,Y) \in L(\mu_1)$ and~$(v_{j-1},v_j) \in L(\mu_j)$ for $j=2,\dots,k$; see \cref{fig:s-node-4}.  In this case $P_\mu(x,Y)$ is composed by concatenating $P_{\mu_1}(v_1,Y)$ with~$P_{\mu_j}(v_{j-1},v_j)$ for $j=2,\dots,k$ and the embedding~$\mathcal E(x,Y)$ is obtained by merging the embeddings~$\mathcal E_{\mu_1}(v_1,Y)$ and~$\mathcal E_{\mu_j}(v_{j-1},v_{j})$ for~$j=2,\dots,k$ so that~$u$ and~$v$ lie on the outer face.

\smallskip\noindent\textbf{Case 4: $\mu$ is an R-node.} 
If~$s\notin\pert(\mu)$, then $P_{\mu}(v,u)$
must traverse every vertex in $\pert(\mu)$, starting with the edge $e$ counterclockwise following the parent edge, with all other edges of $\pert(\mu)$ to the left of $P_{\mu}(v,u)$. Since $v$, $u$, and $e$ lie on a common face, $P_{\mu}(v,u)$ follows only this face, so $\skel(\mu)$ is outerplanar; a contradiction, as the skeleton of an R-node is triconnected.

    \begin{figure}[t]
        \centering
        \begin{subfigure}[b]{.32\textwidth}
            \centering
            \includegraphics[page=1]{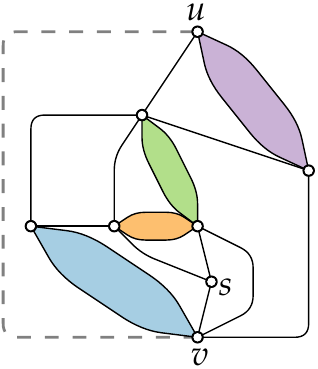}
            \caption{$\skel(\mu)$}
        \label{fig:r-node-pole-1}
        \end{subfigure}
        \hfill
        \begin{subfigure}[b]{.32\textwidth}
            \centering
            \includegraphics[page=3]{figures/r-node}
            \caption{The auxiliary graph}
        \label{fig:r-node-pole-2}
        \end{subfigure}
        \hfill
        \begin{subfigure}[b]{.32\textwidth}
            \centering
            \includegraphics[page=4]{figures/r-node}
            \caption{$P_\mu(u,\emptyset)$}
        \label{fig:r-node-pole-3}
        \end{subfigure}
        \caption{An R-node $\mu$ for the case that $s$ is a vertex of $\skel(\mu)$.}
        \label{fig:r-node-pole}
    \end{figure}

    Now, consider the case that $s\in\pert(\mu)$.  We start with the case that~$s$ is a vertex of~$\skel(\mu)$; see \cref{fig:r-node-pole}.  The path $P_{\mu}(u,Y)$ certainly must traverse the pertinent graphs of all children that are not Q-nodes and possibly also some of the Q-nodes.  To model this, we consider the auxiliary plane graph obtained from~$\skel(\mu)$ by replacing each virtual edge that corresponds to a non-Q-node child by a path of length~2.  We now employ the algorithm from \cref{thm:plane-staq-ham} for both embeddings of the auxiliary graph. We try every edge incident to~$s$ as a possible starting edge and check when we arrive at~$u$ whether all vertices except the vertices in~$Y$ have been visited. If this is successful, let~$v_1,\dots,v_\ell$ be the corresponding path in~$\skel(\mu)$ and let~$\mu_i$ be the child corresponding to the virtual edge~$\{v_i,v_{i+1}\}$ for~$i=1,\dots,\ell-1$.  If further~$(v_i,v_{i+1}) \in L(\mu_i)$ for~$i=1,\dots,\ell-1$, then $(v,u) \in L(\mu)$.  In that case, $P_\mu(v,u)$ is obtained by concatenating~$P_{\mu_i}(v_i,v_{i+1})$ for~$i=1,\dots,\ell-1$ and~$\mathcal E_\mu(v,u)$ is obtained from the embedding of the auxiliary graph by replacing each path of length~2 that represents a non-Q-node child~$\mu_i$ by~$\mathcal E_{\mu_i}(v_i,v_{i+1})$.
    If this test is not successful we repeat the above steps with the flipped embedding of the auxiliary graph.

    \begin{figure}[t]
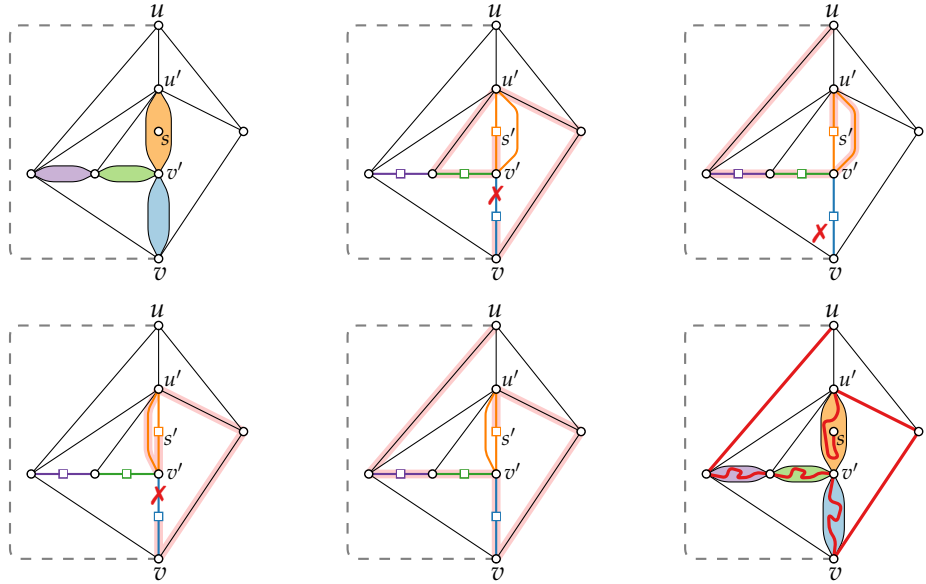

        \centering
            \includegraphics[page=5]{figures/r-node}
        \hfill
            \includegraphics[page=7]{figures/r-node}
        \hfill
            \includegraphics[page=8]{figures/r-node}
        
        \medskip
            \includegraphics[page=10]{figures/r-node}
        \hfill
            \includegraphics[page=11]{figures/r-node}
        \hfill
            \includegraphics[page=12]{figures/r-node}
        \caption{An R-node $\mu$ for the case that $s$ lies in a child $\nu$. }
        \label{fig:r-node-child}
    \end{figure}
    
    Otherwise~$s$ is contained in a child~$\nu$ of~$\mu$ with poles~$u',v'$; see \cref{fig:r-node-child}.  We consider the same auxiliary graph $H$ as above.  Let $s'$ be the vertex on the length-2 path between $u'$ and $v'$ in $H$.  We add the edge~$(u',v')$ to $H$ embedded either to the left or to the right of the path $\langle u',s',v'\rangle$; this way we obtain two different embeddings of the resulting graph. 
    We now employ the algorithm from \cref{thm:plane-staq-ham} for both embeddings.  Again, we try both starting edges incident to $s$ and for each of them, we check when we arrive at~$x$ whether all vertices except possibly the vertices in~$Y$ have been visited.
    This way, we obtain up to four solutions, depending on the starting edge and whether we use the edge $(u',v')$ or not. Let $x'\in\{u',v'\}$ such that $(s,x')$ is the starting edge of one such solution. If the path uses the edge $(u',v')$, then we have to check whether $(x',\emptyset)\in L(\nu)$; otherwise, we have to check whether $(x',\{u',v'\}\setminus\{x'\})\in L(\nu)$. 
    If the check is successful, then we compute the corresponding path $P_\mu(v,u)$ and embedding $\mathcal E_\mu(v,u)$ as in the case $s\notin\pert(\mu)$.
    This finishes the description of the R-node.

We conclude by mentioning that the running time stems from the fact that in an R-node that contains~$s$, we try~$O(n)$ starting edges for the path, where each try takes~$O(n)$ time.  Therefore, for a fixed pair of endvertices~$s,t$ testing the existence of an embedding that is $st$-sided takes~$O(n^2)$ time.  Since there are~$O(n^2)$ pairs of endvertices to try, the overall running time is~$O(n^4)$.
\end{proof}

\section{The \Rique-number of Complete Graphs}

In this section, we provide bounds on the density of graphs admitting $k$-page $\SQ$-layouts and on the $\rique$-number of complete graphs.

\begin{theorem}\label{thm:density}
    Any graph $G$ that admits a $k$-page $\SQ$-layout cannot have more than
    $(2n+2)k-k^2+(n-3)$ edges.
\end{theorem}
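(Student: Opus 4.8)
The plan is to bound the number of edges in a single page and then sum over the $k$ pages, paying careful attention to double-counting the path edges $(v_i,v_{i+1})$ that may appear in multiple pages (or, more conveniently, are counted once globally). First I would fix a $k$-page $\SQ$-layout with vertex order $v_1,\dots,v_n$ and, for a single page, analyze which edges can coexist. By \cref{lem:pattern-staq}, the edges in one page avoid Pattern~\ref{pattern-staq}; equivalently, the edges removed at the head form a nested (non-crossing in the stack sense) family and those removed at the tail form a family that is ``flat'' with respect to the tail. So a page decomposes into a set $H$ of head-edges and a set $T$ of tail-edges. The head-edges behave like a one-page stack (book) layout, hence at most $2n-3$ of them, by the classical bound for outerplanar/one-stack graphs \cite{DBLP:journals/jct/BernhartK79}; one needs to double-check that the presence of tail-edges does not relax this, but since $H$ on its own must be non-crossing in the stack order, the $2n-3$ bound applies to $|H|$ directly.

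The main point is to bound the tail-edges, and more importantly to bound the \emph{interaction} across the $k$ pages so that the $-k^2$ term appears. Here is the intended mechanism: consider, for each vertex $v_j$ with $j\ge 2$, the edges $(v_i,v_j)$ with $i<j$ that are removed at $v_j$ from the tail; the defining property of a tail-removal is that at time $j$ this edge is the \emph{oldest} surviving edge in that page's $\SQ$. In particular, within one page, at each ``time'' $v_j$ there is at most one tail-removal, so each page has at most $n-1$ tail-edges — but summing $n-1$ over $k$ pages gives only $k(n-1)$, and combined with $k(2n-3)$ for the head-edges this is $k(3n-4)$, which is weaker than the claimed bound for large $k$. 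To get the sharper $(2n+2)k - k^2 + (n-3)$ I would instead argue globally: across all $k$ pages, at each time step $v_j$ the edge ending there is either a head- or tail-removal in its own page, and a counting argument on how many of the first few / last few positions can simultaneously ``occupy'' the tails of the $k$ $\SQ$s forces the $k^2$ correction. Concretely, I expect the clean way is: the tail-edges of page $p$ incident to the earliest vertices are constrained because a tail-edge $(v_i,v_j)$ requires all edges of page $p$ inserted before it to still be present at time $j$; chaining this across pages shows the number of tail-edges with small source index is reduced by roughly $\binom{k}{2}$ in total.

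The concrete step-by-step would be: (1) split each page into its head-part and tail-part; (2) bound the total number of head-edges over all pages by $k(2n-3)$, minus the path edges if we want to count them separately; (3) observe that the path edges $(v_i,v_{i+1})$ contribute $n-1$ and can be assigned to one page, and that the remaining ``non-path'' structure in each page's head-part is a maximal outerplanar-type graph, tightening the per-page head bound; (4) bound the tail-edges of each page, getting something like $n-1$ per page but then show that, summed over pages, the tail-edges with source among the first $k$ vertices (or target among the last $k$) are globally limited, subtracting the $\binom{k}{2}$-type term; (5) add everything up and simplify to $(2n+2)k-k^2+(n-3)$. I expect step (4) — pinning down exactly \emph{why} the cross-page tail interaction costs $k^2/2$ rather than $0$ — to be the main obstacle; the likely resolution is to look at the state of all $k$ queues at the moments $v_2,v_3,\dots$ and note that a tail-removal at $v_j$ in page $p$ certifies that page $p$'s $\SQ$ was nonempty with its oldest element created before $v_j$, and that the first few vertices cannot simultaneously be the bottom of many queues, an inequality that telescopes to the desired quadratic savings.
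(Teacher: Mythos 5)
Your outline does not close the gap it itself identifies: everything hinges on step (4), the source of the $-k^2$ term, and there you offer only a heuristic (``the first few vertices cannot simultaneously be the bottom of many queues, an inequality that telescopes'') without a concrete inequality or proof. This is precisely where the paper does something different and essential. The paper does not reason about the simultaneous states of the $k$ riques at all; instead it proves a structural normalization: the given layout can be modified into a layout $E'_1,\dots,E'_k$ in which vertex $v_i$ (for $i\le k$) has edges only on pages $E'_1,\dots,E'_i$. The key step is that if $(v_i,v_j)$ is the edge of $v_i$ with $j$ maximal lying on page $E'_i$, and no earlier vertex uses $E'_i$, then $(v_i,v_j)$ sits at the tail of that page's $\SQ$ throughout, blocking every tail-removal between $v_{i+1}$ and $v_{j-1}$; hence all edges from $v_i$ into that range can be rerouted onto $E'_i$ as tail-edges, and all edges to $v_{j+1},\dots,v_n$ already lie on earlier pages. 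Consequently page $E'_i$ lives on the $n-i+1$ vertices $v_i,\dots,v_n$, and applying the planarity bound from \cref{thm:characterization} ($3n'-6$ edges, minus the $n-1$ spine edges counted once globally) to each page gives $\sum_{i=1}^{k}\bigl(2(n-i+1)-5\bigr)$, whose arithmetic series produces $-k^2$. Without this (or an equivalent) per-page vertex-count reduction, your approach cannot reach the claimed bound.

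There are also two local errors in your per-page accounting. First, the claim that ``within one page, at each time $v_j$ there is at most one tail-removal'' is false: if $(v_1,v_5),(v_2,v_5),(v_3,v_5)$ are all on one page, they are removed at $v_5$ from the tail in succession (oldest first), so a single vertex can absorb many tail-removals; your $n-1$ bound on tail-edges per page is therefore unjustified (tail-edges by themselves form a one-queue layout, which can have up to $2n-3$ edges). Second, splitting a page into a stack part and a queue part and bounding them separately is looser than the paper's starting point, which is that a one-page $\SQ$-graph is planar (via the strongly 1-sided subhamiltonian characterization of \cref{thm:characterization}) and hence has at most $3n-6$ edges. Incidentally, the paper's computation actually yields $(2n-4)k-k^2+(n-1)$, which is what is used later for the lower bound on $\sq(K_n)$; the constant in the theorem statement as printed is looser.
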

\begin{proof}
Let $v_1,\ldots,v_n$ be the linear order of the vertices and let $E_1,\ldots,E_k$ be the pages of a $k$-page $\SQ$-layout of $G$. 
Since, by \cref{thm:characterization}, each page is a planar graph, it has at most $3n-6$ edges. 
Since, however, the $n-1$ so-called \emph{spine edges} $(v_i,v_{i+1})$, $ i=1,\ldots,n-1$ can be added as head-edges to every page, every page has at most $2n-5$ non-spine edges. Next, we argue that there exists a $k$-page $\SQ$-layout $E'_1,\ldots,E'_k$ of $G$ such that each vertex $v_i$, $1\le i\le k$ contains edges only on pages $E'_1,\ldots,E'_i$. We start with $E'_i=E_i$, for each $1\le i\le k$.

\begin{figure}[t]
    \centering
    \includegraphics{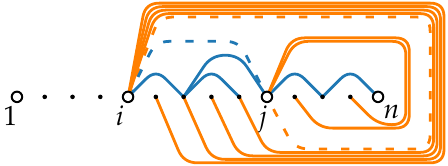}
    \caption{Illustration of page $i$ in the proof of \cref{thm:density}.}
    \label{fig:lowerbound}
\end{figure}

For $1\le i\le k$, assume that the first $i-1$ vertices $v_1,\ldots,v_{i-1}$ only have edges in $E'_1,\ldots,E'_{i-1}$ and consider the next vertex $v_i$ (see \cref{fig:lowerbound}). 
If $v_i$ also only has edges in $E'_1,\ldots,E'_{i-1}$, then the claim follows. Otherwise, let $(v_i,v_j)$, $i+1\le j\le n$ be the edge with $j$ maximal that does not lie in $E'_1,\ldots,E'_{i-1}$ and assume w.l.o.g.\ that $(v_i,v_j)\in E'_i$. 
By our assumption, there is no edge that stems from $v_1,\ldots,v_{i-1}$. Further, the edge $(v_i,v_j)$ blocks any possible tail-edge between two vertices in $v_{i+1},\ldots,v_{j-1}$ in $E'_i$. Hence, all tail-edges that end in a vertex in $v_{i+1},\ldots,v_{j-1}$ in $E'_i$ stem from $v_i$. Thus, we can add all edges from $v_i$ to $v_{i+1},\ldots,v_{j-1}$ to $E'_i$ as tail-edges. Since all edges from $v_1,\ldots,v_{i-1}$ to $v_i$ lie in $E'_1,\ldots, E_{i-1}'$, by the choice of $j$, so do all edges from $v_i$ to $v_{j+1},\ldots, v_n$. Thus, $E'_{i+1},\ldots,E'_k$ contain no edge of $v_i$.
Since any page $E'_i$, $1\le i\le k$ contains edges of at most $n-i+1$ vertices, it has at most $2(n-i+1)-5=2n-2i-3$ non-spine edges. Hence, the number of edges in $E'_1,\ldots,E'_k$ is at most
\[
n-1+\sum_{i=1}^k (2n-2i-3) = (2n-4)k-k^2+(n-1).\qedhere
\]
\end{proof}

\noindent We are now ready to present our bounds on the \rique-number of $K_n$.

\begin{theorem}\label{thm:staq-bounds}
  $0.2929(n-2)\approx (1-\frac{1}{\sqrt 2})(n-2) \le \sq(K_n) \le \lceil n/3 \rceil \approx 0.3333n$
\end{theorem}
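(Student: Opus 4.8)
The plan is to establish the two bounds separately, and I expect the upper bound to be the easier of the two.

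\textbf{Upper bound $\sq(K_n)\le\lceil n/3\rceil$.} The idea is to exhibit an explicit $\lceil n/3\rceil$-page $\SQ$-layout of $K_n$. Fix the natural linear order $v_1,\ldots,v_n$. I would partition the vertices into $\lceil n/3\rceil$ consecutive blocks of three (the last possibly smaller), say $B_i=\{v_{3i-2},v_{3i-1},v_{3i}\}$, and try to assign to page $i$ all edges ``starting'' at block $B_i$ --- i.e.\ every edge $(v_a,v_b)$ with $a\in\{3i-2,3i-1,3i\}$ and $b>a$, together with the internal edges of $B_i$. Each such page then consists of edges all emanating from a window of three consecutive spine positions, and by \cref{thm:characterization} it suffices to check this page is planar strongly $1$-sided subhamiltonian. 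Concretely, I would argue directly that the page avoids Pattern~\ref{pattern-staq} (\cref{lem:pattern-staq}): take the three edges realised by a purported pattern $a\prec b\prec c\prec b'\prec\{a',c'\}$; since $e_b=(b,b')$ is a page-$i$ edge, $b\in B_i$, and since $e_a=(a,a')$ is also on page $i$ with $a\prec b$, we need $a\in B_i$ and $a$ strictly before $b$; but $e_a$ extends past $b'$, which past the block means $a'$ and $b'$ are far to the right, and the edge $(v_b,v_{b+1})$ or the structure of the block forces $b$ to be removed at the head --- here the key is that within a block of size $3$ there is not enough room to create the nesting-plus-crossing that Pattern~\ref{pattern-staq} requires, because $a,b,c$ would all have to lie in $B_i$ with $a\prec b\prec c$, pinning $a=3i-2$, $b=3i-1$, $c=3i$, and then handling $e_b$ as the single tail-edge among the at-most-one-problematic configuration. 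I would make this precise by describing the rotation at each vertex explicitly (spine edges as head-edges, the at most one ``long'' tail-edge per vertex of the block) in the spirit of the ordering used in the proof of \cref{thm:characterization}, and then invoke \cref{thm:characterization}.

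\textbf{Lower bound $(1-\tfrac1{\sqrt2})(n-2)\le\sq(K_n)$.} This I would derive purely from the edge-density bound of \cref{thm:density}. If $K_n$ has a $k$-page $\SQ$-layout then $\binom n2\le (2n-4)k-k^2+(n-1)$, i.e.\ $k^2-(2n-4)k+\bigl(\binom n2-(n-1)\bigr)\le 0$. Viewing the left side as a quadratic in $k$ that opens upward, $k$ must lie between its two roots; in particular $k$ is at least the smaller root
\[
k\ \ge\ (n-2)-\sqrt{(n-2)^2-\Bigl(\tbinom n2-(n-1)\Bigr)}.
\]
A short computation gives $(n-2)^2-\binom n2+(n-1)=\tfrac12(n^2-5n+6)\approx \tfrac12(n-2)^2$ (up to lower-order terms), so the square root is $\approx \tfrac1{\sqrt2}(n-2)$ and $k\ge (1-\tfrac1{\sqrt2})(n-2)$. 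I would finish by checking the inequality $(n-2)^2-\binom n2+(n-1)\le \tfrac12(n-2)^2$ holds exactly for all $n\ge 2$ (equivalently $(n-2)(n-3)\ge 0$), so that $\sqrt{(n-2)^2-\binom n2+(n-1)}\le \tfrac1{\sqrt2}(n-2)$ and hence $k\ge (n-2)-\tfrac1{\sqrt2}(n-2)=(1-\tfrac1{\sqrt2})(n-2)$.

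\textbf{Main obstacle.} The delicate part is the upper-bound construction: I must show that grouping three consecutive vertices per page really does yield a page avoiding Pattern~\ref{pattern-staq}, and in particular that the ``long'' edges leaving the block can all be realised (head-edges for the shortest reaches, a single tail-edge for the longest) without any vertex needing two incompatible removals. I expect to need a careful case analysis over which of $v_{3i-2},v_{3i-1},v_{3i}$ an edge starts at, verifying that assigning the farthest edge from each block-vertex as its tail-edge and everything else as head-edges is consistent; the number three is exactly what makes this work, since a block of size $3$ has at most $\binom32=3$ internal edges and the tail-slot of the $\SQ$ can absorb precisely the one outgoing ``far'' edge per vertex. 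If a direct Pattern~\ref{pattern-staq} argument gets unwieldy, the fallback is to exhibit an explicit planar strongly $1$-sided Hamiltonian supergraph of each page and appeal to \cref{thm:characterization}.
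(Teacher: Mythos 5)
Your lower bound is correct and is exactly the paper's argument: substitute $|E(K_n)|=\binom n2$ into \cref{thm:density}, note that $k$ must be at least the smaller root of the resulting upward quadratic, and bound $\sqrt{(n-2)(n-3)/2}\le (n-2)/\sqrt 2$. Nothing to add there.

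The upper-bound construction, however, fails, and neither a careful rotation argument nor your proposed fallback can repair it. Take the block $B_1=\{v_1,v_2,v_3\}$ and $n\ge 6$. Your page $1$ contains every edge from $v_1,v_2,v_3$ to a later vertex, in particular $e_a=(v_1,v_6)$, $e_b=(v_2,v_4)$ and $e_c=(v_3,v_5)$. These satisfy $v_1\prec v_2\prec v_3\prec v_4\prec\{v_6,v_5\}$, which is exactly Pattern~\ref{pattern-staq}, so by \cref{lem:pattern-staq} the page is not a rique page in the chosen order. Worse, this page contains $K_{3,n-3}$ (three centers joined to all of $v_4,\dots,v_n$) and is therefore non-planar for $n\ge 6$, so by \cref{thm:characterization} it has rique-number greater than $1$ under \emph{every} vertex order; the fallback of exhibiting a planar strongly 1-sided Hamiltonian supergraph is thus also unavailable. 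The conceptual issue is that a single rique page cannot host two ``fans'' of tail-edges rooted at different vertices: the pending edges of the earlier fan sit at the tail and block the later fan's edges from ever reaching it, while the head is reserved for LIFO behaviour. The paper's construction accordingly uses only \emph{one} early vertex per page: page $i$ (for $1\le i\le n/3$) consists of the star from $v_i$ to all later vertices, realized as tail-edges (inserted consecutively at $v_i$, they leave the tail in FIFO order), together with the $i$-th page of an $n/3$-page stack layout of the clique on $v_{n/3+1},\dots,v_n$ (which exists since the stack-number of $K_m$ is $\lceil m/2\rceil$), realized as head-edges. The star and the stack page coexist on one rique page because every star edge enters the rique before any stack edge and is never affected by head operations. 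This covers all edges of $K_n$ with $\lceil n/3\rceil$ pages.
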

\begin{proof} 
Let $k=\sq(K_n)$. As $K_n$ has $n(n-1)/2$ edges, \cref{thm:density} implies:
\[
(2n-4)k-k^2+(n-1) \ge \frac{n(n-1)}{2} 
\Leftrightarrow  k^2-(2n-4)k+(\frac{n^2}{2}-\frac{3n}{2}+1) \le 0
\]
The inequality above then gives the claimed lower bound as follows:
\[
k \ge n-2 -\frac{\sqrt{2}}{2} \sqrt{(n-2)(n-3)}
\ge n-2-\frac{\sqrt{2}(n-2)}{2}
=(1-\frac{1}{\sqrt{2}})(n-2)
\]

\begin{figure}[t]
    \centering
    \includegraphics{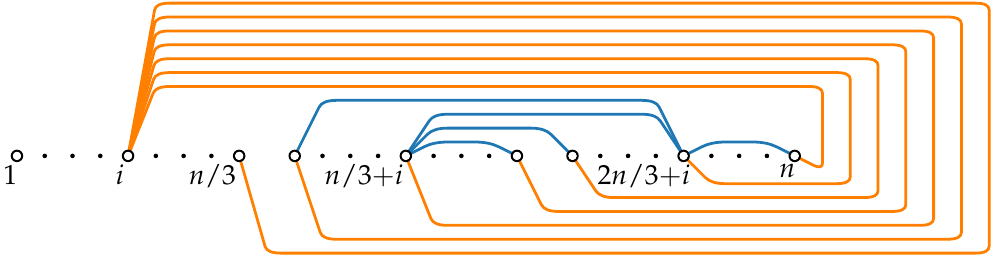}
    \caption{Illustration of page~$i$ in the upper bound of \cref{thm:staq-bounds}.}
    \label{fig:kn}
\end{figure}

\noindent We now show how to compute a layout of $K_n$ with $\lceil n/3 \rceil$ pages.  Assume w.l.o.g.\ that $n$ is divisible by~$3$.  Take an arbitrary stack layout of the clique on vertices~$v_{n/3+1},\dots,v_{v_n}$ on $n/3$ pages~\cite{DBLP:journals/jct/BernhartK79}.  Then put on page $i$ all edges of vertex~$v_i$ as tail-edges; see \cref{fig:kn}.
\end{proof}

\noindent We conclude this section with a few more insights on the \rique-number of complete graphs, which we derived by adjusting a formulation of the book embedding problem as a SAT instance~\cite{DBLP:conf/gd/Bekos0Z15}; for details see \cref{app:recognition}. This adjustment allowed us to obtain bounds on the \rique-number of $K_n$ for values of $n$ in $[4,\ldots,27]$; see \cref{tbl:kn} and \cref{fig:k7,fig:k11} for page-minimal layouts of $K_7$ and $K_{11}$.
\begin{toappendix}
\section{Omitted Material from \cref{sec:recognition}}
\label{app:recognition}
In our formulation, there exist three different types of variables, denoted by $\sigma$, $\phi$ and $\chi$, with the following meanings: 
\begin{enumerate*}[label=(\roman*)]
\item for a pair of vertices $u$ and $v$, variable $\sigma(u,v)$ is $\texttt{true}$, if and only if $u$ is to the left of $v$ along
the spine, 
\item for an edge $e$ and a page $i$, variable $\phi_i(e)$ is $\texttt{true}$, if and only if edge $e$ is assigned to page $i$ of the book, and 
\item for a pair of edges $e$ and $e'$, variable $\chi(e,e')$ is $\texttt{true}$, if and only if $e$ and $e'$ are assigned to the same page.
\end{enumerate*}  
Hence, there exist in total $O(n^2+m^2+pm)$ variables, where $n$ denotes the number of vertices of the graph, $m$ its number of edges, and $p$ the number of available pages. A set of $O(n^3+m^2)$ clauses ensures that the underlying order is indeed linear, and that no two edges of the same page cross; for details we point the reader to~\cite{DBLP:conf/gd/Bekos0Z15}. In our formulation, we neglected the clauses ensuring that edges of the same page do not cross, and we introduced clauses ensuring the absence of Pattern~\ref{pattern-staq} of \cref{lem:pattern-staq}. In particular, for every triplet of edges $(a,a')$, $(b,b')$ and $(c,c')$, we can guarantee that they do not form Pattern~\ref{pattern-staq} by the following clause: 
\begin{align*}
&\sigma(a,b) \wedge \sigma(b,c) \wedge \sigma(c,b') \wedge \sigma(b',a') \wedge \sigma(b',c')\rightarrow\\
& \neg(\chi((a,a'),(b,b'))\wedge\chi((b,b'),(c,c'))\wedge\chi((a,a'),(c,c')))
\end{align*}
With the aforementioned formulation we were able to derive bounds on the \rique-number of $K_n$ for values of $n$ in $[4,\ldots,27]$; see \cref{tbl:kn} for an overview and \cref{fig:k7,fig:k11} for page-minimal drawings of $K_7$ and $K_{11}$. For $K_{22}$ and $K_{25}$, we could not find the exact number and stopped the computation after 1 week.
\end{toappendix}

\begin{figure}[t]
    \centering
    \includegraphics[page=2]{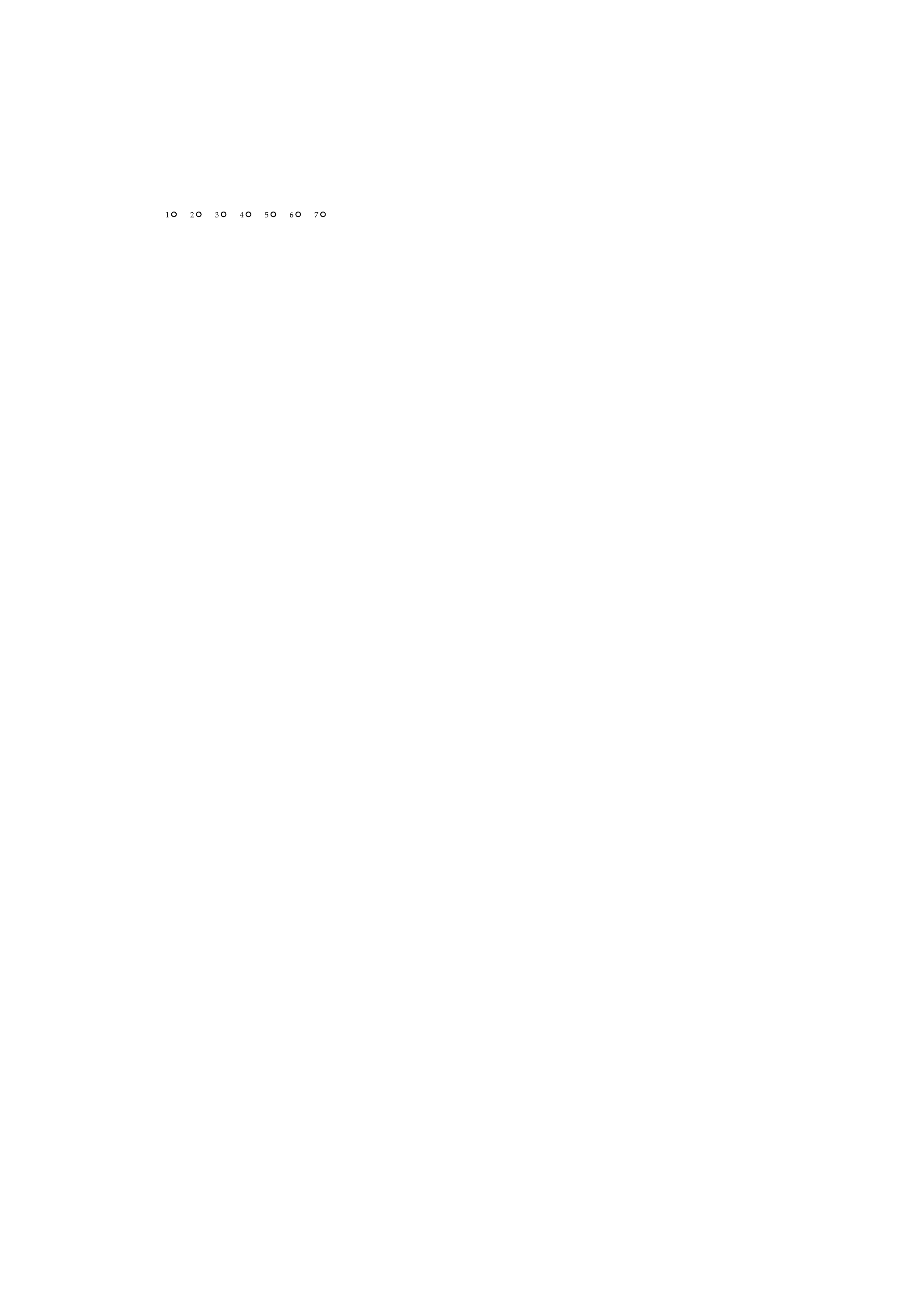}
    \hfill
    \includegraphics[page=3]{figures/k7.pdf}
    \caption{A 2-page $\SQ$-layout of $K_7$}
    \label{fig:k7}
\end{figure}

\begin{figure}[t]
    \centering
    \includegraphics[page=2,width=.32\linewidth]{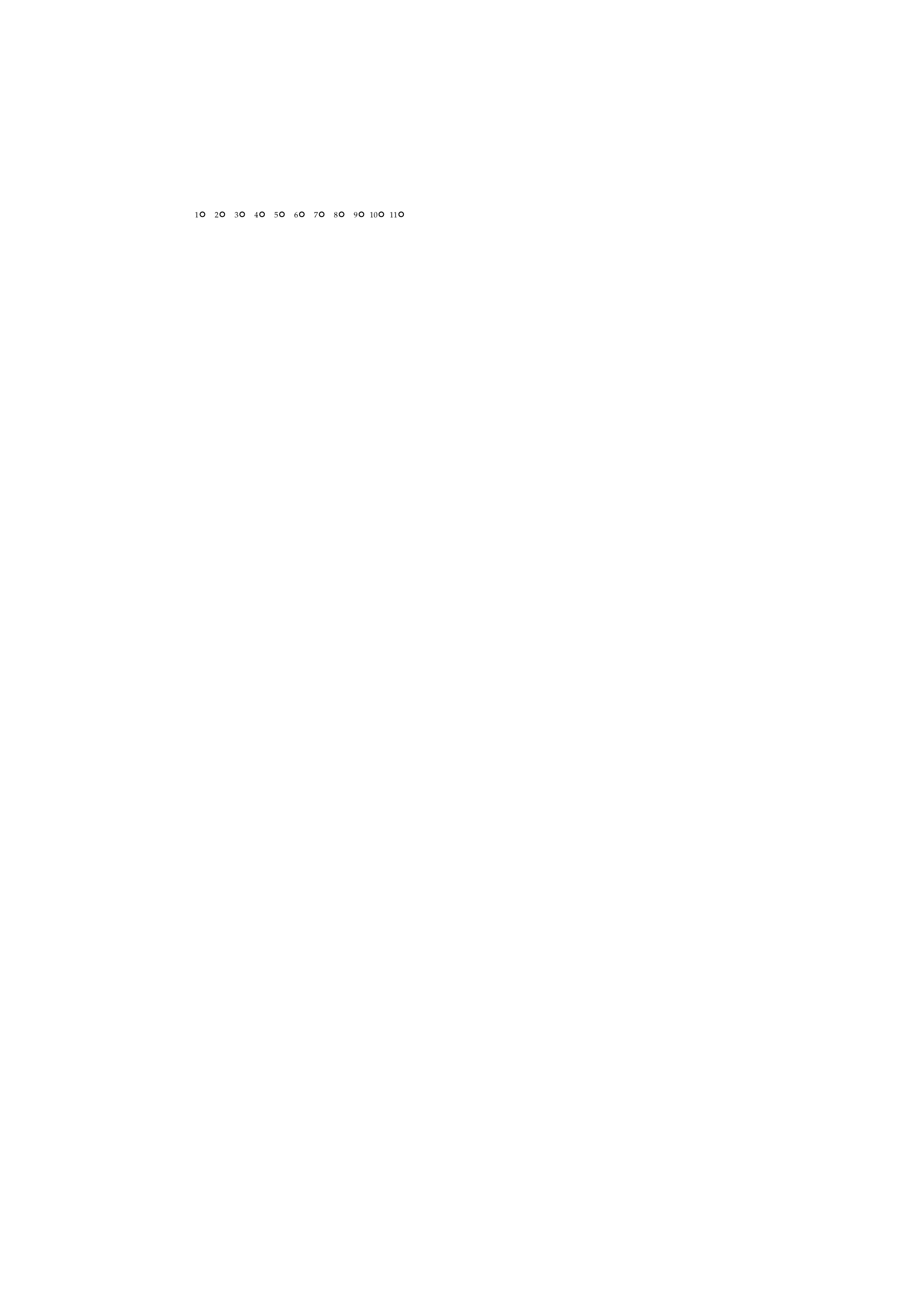}
    \hfill
    \includegraphics[page=3,width=.32\linewidth]{figures/k11.pdf}
    \hfill
    \includegraphics[page=4,width=.32\linewidth]{figures/k11.pdf}
    \caption{A 3-page $\SQ$-layout of $K_{11}$}
    \label{fig:k11}
\end{figure}

\begin{table}[t]
\caption{A summary of our results on the \rique-number of $K_n$}
\centering
\begin{tabular}{c@{\quad}|@{\quad}c@{\quad}c@{\quad}c@{\quad}c@{\quad}c@{\quad}c@{\quad}c@{\quad}c@{\quad}c@{\quad}c}
$n$        & 4 & 5--7 & 8--11 & 12--14 & 15--17 & 18--21 & 22  & 23--24 & 25  & 26--28 \\\hline
$\sq(K_n)$ & 1 & 2    & 3     & 4      & 5      & 6      & 6 or 7 & 7      & 7 or 8 & 8     
\end{tabular}
\label{tbl:kn}
\end{table}

\section{Conclusions and Open Problems}
In this work, we continued the study of linear layouts of graphs in relation to known data structures, in particular, in relation to the restricted-input deque. Several problems are raised by our work: 
\begin{enumerate*}[label=(\roman*)]
\item the most important one is the complexity of the recognition of graphs with \rique-number~$1$,
\item another quite natural problem is to further narrow the gap between our lower and upper bounds on the \rique-number of $K_n$; our experimental results indicate that there exist room for improvement in the upper bound, 
\item for complete bipartite graphs, we did not manage to obtain improved bounds (besides the obvious ones that one may derive from their stack- or queue-number), 
\item another interesting question regards the \rique-number of planar graphs, which ranges between $2$ and $4$ (i.e., the upper bound by their stack-number); the same problem can be studied also for subclasses of planar graphs (e.g., planar $3$-trees). 
\end{enumerate*}

\bibliographystyle{splncs04}
\bibliography{stacks,queues,general}
\end{document}